\newtheorem{theorem}{Theorem}
\newtheorem{lemma}{Lemma}
\newcommand{\eps}{\epsilon}
\newcommand{\al}{\alpha} 
\newcommand{\la}{\lambda} 
\newcommand{\np }{\textbf{NP\,}} 
\newcommand{\maxlin }{MAX-E3LIN2 problem}
\title{\bf New  Inapproximability Bounds for  TSP \\[1ex]}
\author{
Marek Karpinski\thanks{Dept. of Computer Science and the Hausdorff
    Center for Mathematics, University of Bonn.
    Supported in part by DFG grants and the Hausdorff Grant EXC59-1/2.
    Email:~\texttt{marek@cs.uni-bonn.de}}
\and
Michael Lampis\thanks{
    KTH Royal Institute of Technology.
    Research supported by ERC Grant 226203
    Email:~\texttt{mlampis@kth.se}}
    \and
    Richard Schmied\thanks{Dept. of Computer Science, University of Bonn.
    Work supported by Hausdorff Doctoral Fellowship.
    Email:~\texttt{schmied@cs.uni-bonn.de}}
}
\date{ }
\begin{document}
\maketitle

\begin{abstract}

\noindent In this paper, we study the approximability of the metric Traveling
Salesman Problem (TSP) and prove new explicit inapproximability bounds for 
that problem.  
The best up to now known hardness of approximation bounds were
$185/184$ for the symmetric case (due to Lampis) and $117/116$ for the
asymmetric case (due to Papadimitriou and Vempala). We construct here two new
bounded occurrence CSP
reductions which improve these bounds to $123/122$ and $75/74$, respectively.
The latter bound is the first improvement  in more than a decade 
for the case of the  asymmetric TSP. 
One of our main tools, which may be of independent interest, is a new 
construction of a bounded degree wheel amplifier  used in the proof
of our results.
\end{abstract}

\section{Introduction} 
The \textbf{Traveling Salesman Problem (TSP)} is one of
the best known and most fundamental problems in combinatorial optimization.
Determining how well it can be approximated in polynomial time is therefore a
major open problem, albeit one for which the solution still seems elusive. On
the algorithmic side, the best known efficient approximation algorithm for the
symmetric case is still a 35-year old algorithm due to Christofides~\cite{C76}
which achieves an approximation ratio of $3/2$. However, recently there has
been a string of improved results for the interesting special case of Graphic
TSP, improving the ratio to $7/5$ \cite{GSS11,MS11,M12,SV12}. For the asymmetric case
(ATSP), it is not yet known if a constant-factor approximation is even possible,
with the best known algorithm achieving a ratio of $O(\log n/\log \log n)$
\cite{AGM10}.

Unfortunately, there is still a huge gap between the algorithmic results
mentioned above and the best currently known hardness of approximation results
for TSP and ATSP. For both problems, the known inapproximability thresholds are
small constants ($185/184$ and $117/116$ (cf. \cite{L12,PV06}), respectively). 
 In this paper, we try
to improve this situation somehow by giving modular hardness reductions that slightly
improve the hardness bounds for both problems to  $123/122$ and $75/74$, respectively.
The latter bound is the first, for more than a decade now, improvement 
of Papadimitriou and Vempala bound \cite{PV06} for the ATSP.
The method of our solution differs essentially from that  of \cite{PV06}
and uses some new paradigms of the bounded occurrence optimization which
could be also of independent interest in other applications.  
Similarly to \cite{L12}, the hope is
that the modularity of our construction, which goes through an intermediate
stage of a bounded-occurrence Constraint Satisfaction Problem (CSP), will allow
an easier analysis and simplify future improvements. Indeed, one of the main
new ideas we rely on is a certain new variation of the wheel amplifiers first defined
by Berman and Karpinski \cite{BK01} to establish inapproximability for
3-regular CSPs. This construction, which may be of independent interest, allows
us to establish inapproximability for a 3-regular CSP with a special structure.
This special structure then makes it possible to simulate many of the
constraints in the produced graph essentially ``for free'', without using
gadgets to represent them. Thus, even though for the remaining constraints we
mostly reuse gadgets which have already appeared in the literature, we are
still able to obtain improved bounds.

Let us now recall some of the previous work on the hardness of approximation of
TSP and ATSP. Papadimitriou and Yannakakis~\cite{PY93} were the first to
construct a reduction that, combined with the  PCP Theorem~\cite{ALM98}, gave a
constant inapproximability threshold, though the constant was not more than $1
+ 10^{-6}$ for the TSP with distances either one or two.
Engebretsen~\cite{E03} gave the first explicit approximation lower bound of
$5381/5380$   for the problem.  The inapproximability factor was improved to
$3813/3812$ by B\"ockenhauer and Seibert~\cite{BS00}, who studied the
restricted version of the TSP  with distances one, two and three.
Papadimitriou and Vempala~\cite{PV06} proved that it is \np-hard to approximate
the TSP with a factor better than $220/219$.  Presently, the best known
approximation lower bound is $185/184$  due to Lampis~\cite{L12}.

The important restriction of the TSP, in which we consider instances with
distances between cities  being  values in $\{1, \ldots ,B\}$, is often
referred to as  the $(1,B)$-TSP. The best known efficient approximation
algorithm for the $(1, 2)$-TSP has an approximation ratio $8/7$ and is due to
Berman and Karpinski~\cite{BK06}.  As for  lower bounds, Engebretsen and
Karpinski~\cite{EK06} gave inapproximability thresholds for the $(1,B)$-TSP
problem of $741/740$  for $B = 2$ and $389/388$  for $B = 8$.  More recently,
Karpinski and Schmied~\cite{KS12,KS13} obtained improved inapproximability
factors for  the $(1, 2)$-TSP and the $(1, 4)$-TSP of $535/534$  and $337/336$,
respectively.

For ATSP the currently best known approximation lower bound was $117/116$  due
to Papadimitriou and Vempala~\cite{PV06}.  When we restrict the problem to
distances with values in $\{1, \ldots ,B\}$, there is a simple approximation
algorithm with approximation ratio $B$ that constructs an arbitrary  tour as
solution.  Bl\"aser~\cite{B04} gave an efficient approximation algorithm  for
the $(1,2)$-ATSP with approximation ratio  $5/4$.  Karpinski and
Schmied~\cite{KS12,KS13} proved that it is \np-hard to approximate the $(1,
2)$-ATSP and the $(1, 4)$-ATSP within any factor   less than $207/206$  and
$141/140$, respectively.  For the case $B = 8$, Engebretsen and
Karpinski~\cite{EK06} gave an inapproximability threshold of $135/134$.

\textbf{Overview:} In this paper we give a hardness proof which proceeds in two
steps. First, we start from the MAX-E3-LIN2 problem, in which we are given a
system of linear equations mod $2$ with exactly three variables in each
equation and we want to find an assignment such as to maximize the number of
satisfied equations. Optimal inapproximability results for this problem were
shown by H\aa stad~\cite{H01}. We reduce this problem to a special case where
variables appear exactly $3$ times and the linear equations have a particular
structure. The main tool here is a new variant of the wheel amplifier graphs of
Berman and Karpinski \cite{BK01}.

In the second step, we reduce this $3$-regular CSP to TSP and ATSP. The general
construction is similar in both cases, though of course we use different
gadgets for the two problems. The gadgets we use are mostly variations of
gadgets which have already appeared in previous reductions. Nevertheless, we
manage to obtain an improvement by exploiting the special properties of the
$3$-regular CSP. In particular, we show that it is only necessary to construct
gadgets for roughly one third of the constraints of the CSP instance, while the
remaining constraints are simulated without additional cost using the
consistency properties of our gadgets. This idea may be useful in improving the
efficiency of approximation-hardness reductions for other problems.

Thus, overall we follow an approach unlike that of \cite{PV06}, where the
reduction is performed in one step, and closer to \cite{L12}. The improvement
over \cite{L12} comes mainly from the idea mentioned above, which is made
possible using the new wheel amplifiers, as well as several other tweaks. The
end result is a more economical reduction which improves the bounds for both
TSP and ATSP.

An interesting question may  be whether our techniques can also be used to
derive improved inapproximability results for  variants of the ATSP and TSP
(cf. \cite{EK06},\cite{KS13} and \cite{KS12}) or other graph problems, such as
the Steiner Tree problem.


\section{Preliminaries}

In the following, we give some definitions
concerning directed (multi-)graphs and 
 omit the corresponding definitions for 
 undirected (multi-)graphs  if they follow  from
the directed case. Given a directed graph $G=(V(G), E(G))$
and $E'\subseteq E(G)$, for  $e=(x,y)\in E(G)$, we define $V(e  )=\{x,y  \}$
and $V(E') = \bigcup_{e\in E'} V(e)$. For convenience, we abbreviate 
a sequence of edges $(x_1,x_2)$,    $(x_2, x_3), \ldots, (x_{n-1}, x_n)$
by $x_1 \rightarrow  x_2 
\rightarrow x_3 \rightarrow \ldots \rightarrow x_{n-1}\rightarrow x_n$.
In the undirected case, we use sometimes  $x_1 -  x_2 
- x_3 - \ldots - x_{n-1}- x_n$ instead of $\{x_1,x_2\}$,    
$\{x_2, x_3\}, \ldots, \{x_{n-1}, x_n\}$.
Given a directed (multi-)graph $G$, an \emph{Eulerian} cycle in $G$
is a directed cycle that traverses all edges of $G$ exactly once. 
We refer to $G$ as \emph{Eulerian}, if there exists an Eulerian cycle 
in $G$.
For  a 
multiset $E_T$ of directed edges and $v\in V(E_T)$,
we define the outdegree (indegree) of $v$ with respect to $E_T$, 
denoted by $outd_T (v)$ ($ind_T (v)$), 
to be  the number
of edges in $E_T$ that are outgoing of (incoming to) $v$. 
The \emph{balance} of a vertex $v$ with respect to $E_T$ is defined as
$bal_T(v) = ind_T (v) - outd_T (v)$.
In the case of a multiset $E_T$ of undirected edges, 
we define the balance $bal_T(v)$ of a vertex $v\in V(E_T)$ to be one if the number
of incident edges in $E_T$  is odd and zero otherwise. 
We refer to vertices  
$v\in V(E_T)$ with $bal_T(v) = 0$ as \emph{balanced} with respect to $E_T$.
It is well known that 
 a (directed) (multi-)graph $G=(V(G), E(G))$ is Eulerian  
if and only if all
edges  are in the same (weakly) connected component and all vertices $v\in V(G)$ are balanced
with respect to $E(G)$.

Given a multiset of edges $E_T$, we denote by $con_T$ the number of (weakly)
connected components in the graph induced by $E_T$.  A \emph{quasi-tour} $E_T$
in a (directed)  graph $G$ is a multiset of edges from $E(G)$ such that all
vertices are balanced with respect to $E_T$ and $V(E_T) = V(G)$. We refer to a
quasi-tour $E_T$ in $G$ as a \emph{tour} if $con_T=1$. Given a cost function
$w:E(G) \rightarrow \mathbb{R}_+ $, the cost of a quasi-tour $E_T$ in $G$ is
defined by $\sum_{e\in E_T} w(e) + 2(con_T -1) $.

In the  Asymmetric Traveling Salesman problem (ATSP), we are given a
directed graph $G=(V(G), E(G))$ with  positive weights on edges and we want
to find an ordering $v_1, \ldots , v_n$ of the vertices such as to minimize
$ d_G(v_n,v_1) +\sum_{i\in [n-1]}d_G(v_i,v_{i+1})$, 
where $d_G$ denotes the shortest path distance in $G$.

In this paper, we will use the following equivalent reformulation of the ATSP:
Given a directed graph $G$ with  weights on edges, we want to find a tour $E_T$
in $G$, that is, a spanning connected multi-set of edges that balances all
vertices, with minimum cost.

The metric Traveling Salesman problem (TSP) is the special case
of  the ATSP, in which instances are  undirected graphs with positive weights on 
edges.

\section{Bi-Wheel Amplifiers}

In this section, we define the bi-wheel amplifier graphs which will be our main
tool for proving hardness of approximation for a bounded occurrence CSP
with some special properties. Bi-wheel amplifiers are a simple variation of the 
wheel amplifier
graphs given in \cite{BK01}. Let us first recall some definitions (see also
\cite{BK03}).

If $G$ is an undirected graph and $X\subset V(G)$ a set of vertices, we say
that $G$ is a \emph{$\Delta$-regular amplifier} for $X$ if the following conditions
hold:

\begin{itemize}

\item All vertices of $X$ have degree $\Delta-1$ and all vertices of
$V(G)\backslash X$ have degree $\Delta$.

\item For every non-empty subset $U\subset V(G)$, we 
have the condition that $|E(U,V(G)\backslash
U)|\ge \min\{\,|U\cap X|, |(V(G)\backslash U) \cap X|\,\}$, where $E(U,V(G)\backslash U)$ is
the set of edges with exactly one endpoint in $U$.

\end{itemize}

We refer to the set $X$ as the set of \emph{contact} vertices and to
$V(G)\backslash X$ as the set of \emph{checker} vertices. Amplifier graphs are
useful in proving inapproximability for CSPs, in which  every variable appears a
bounded number of times. Here, we will rely on $3$-regular amplifiers. A
probabilistic argument for the existence of such graphs was given in
\cite{BK01}, with the definition of wheel amplifiers.

A wheel amplifier with $2n$ contact vertices is constructed as follows: first
construct a cycle on $14n$ vertices. Number the vertices $1,\ldots,14n$ and
select uniformly at random a perfect matching of the vertices whose number is
not a multiple of 7. The matched vertices will be our checker vertices, and the
rest our contacts. It is easy to see that the degree requirements are
satisfied.

Berman and Karpinski~\cite{BK01} gave  a probabilistic argument to prove 
that with high probability the above construction indeed produces an 
amplifier graph, that is,
all partitions of the sets of vertices give large cuts. Here, we will use a
slight variation of this construction, called a bi-wheel.

A \textbf{bi-wheel amplifier} with $2n$ contact vertices is constructed as follows: first
construct two disjoint cycles, each on $7n$ vertices and number the vertices of
each $1,\ldots,7n$. The contacts will again be the vertices whose number is
a multiple of 7, while the remaining vertices will be checkers. To complete the
construction, select uniformly at random a perfect matching from the checkers
of one cycle to the checkers of the other.

Intuitively, the reason that amplifiers are a suitable tool here is that, given a
CSP instance, we can use a wheel amplifier to replace a variable that appears
$2n$ times with $14n$ new variables (one for each wheel vertex) each of which
appears 3 times. Each appearance of the original variable is represented by a
contact vertex and for each edge of the wheel we add an equality constraint
between the corresponding variables. We can then use the property that all
partitions give large cuts to argue that in an optimal assignment all the new
vertices take the same value.

We  use the bi-wheel amplifier in our construction in a similar way. The
main difference is that while cycle edges will correspond to equality
constraints, matching edges will correspond to inequality constraints. The
contacts of one cycle will represent the positive appearances of the original
variable, and the contacts of the other the negative ones. The reason we do
this is that we can encode inequality constraints more efficiently than
equality with a TSP gadget, while the equality constraints that arise from the
cycles will be encoded in our construction ``for free'' using the consistency
of the inequality gadgets.

Before we apply the  construction however, 
we have to prove that the bi-wheel amplifiers still
have the desired amplification properties.

\begin{theorem}
\label{thm:biwheel}

With high probability, bi-wheels are 3-regular amplifiers.

\end{theorem}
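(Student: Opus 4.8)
The plan is to verify directly the two conditions in the definition of a $3$-regular amplifier for the set $X$ of contact vertices, following the probabilistic strategy of Berman and Karpinski \cite{BK01,BK03}. The degree condition is immediate and deterministic: every contact lies on one of the two cycles and receives no matching edge, so it has degree $2=\Delta-1$, while every checker lies on a cycle and receives exactly one matching edge, so it has degree $3=\Delta$. Hence the whole content of the theorem is the cut condition: with high probability $|E(U,V(G)\setminus U)|\ge \min\{|U\cap X|,\,|(V(G)\setminus U)\cap X|\}$ for every non-empty $U\subsetneq V(G)$.

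Fix such a $U$, write $\bar U=V(G)\setminus U$, and assume by symmetry that $q:=|U\cap X|\le|\bar U\cap X|$, so $q\le n$; we must bound the probability that $|E(U,\bar U)|<q$. Split the crossing edges by location: let $2a_1$ and $2a_2$ be the numbers of crossing cycle edges on the first and second cycle (each is even, being the number of boundaries between the alternating $U$-runs and $\bar U$-runs along a single cycle), and let $m$ be the number of crossing matching edges, so that $|E(U,\bar U)|=2a_1+2a_2+m$. The key point is that the pair of run-patterns along the two cycles already determines which cycle vertices lie in $U$, hence determines $q$ and also the numbers $k_1,k_2$ of checkers of each cycle lying in $U$; the only remaining randomness is the uniform perfect matching between the $6n$ checkers of cycle $1$ and the $6n$ checkers of cycle $2$. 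For a fixed run-pattern, if $x$ is the number of $U$-checkers of cycle $1$ matched to $U$-checkers of cycle $2$, then $m=(k_1-x)+(k_2-x)$, so $m\ge|k_1-k_2|$ always, and the event $\{m<q-2a_1-2a_2\}$ forces $x>(k_1+k_2-q+2a_1+2a_2)/2$, which is within $n/2$ of its maximum possible value $\min\{k_1,k_2\}$. A direct count of perfect matchings yields, for any integer $x_0$,
\[
\Pr[\,x\ge x_0\,]\;\le\;\frac{\binom{k_1}{x_0}\binom{k_2}{x_0}}{\binom{6n}{x_0}},
\]
which is exponentially small in $n$ as soon as $k_1,k_2$ are bounded away from $0$ and $6n$ and $x_0$ is a constant fraction of $n$.

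The proof then consists of a union bound over all pairs of run-patterns on the two cycles — there are at most $4\binom{7n}{2a_1}\binom{7n}{2a_2}$ of them with the given parameters, the factor $4$ accounting for which alternating class of runs lies in $U$ on each cycle — weighted by the matching estimate above, and using that only patterns with $2a_1+2a_2<q\le n$ contribute (otherwise $|E(U,\bar U)|\ge q$ already, as $m\ge0$). The role of the spacing $7$ between consecutive contacts — which makes each length-$7n$ cycle carry exactly $n$ contacts, just as the single length-$14n$ cycle in \cite{BK01} carries $2n$ — is precisely to make the exponent in the matching bound dominate the exponent $O\bigl((2a_1+2a_2)\log(n/(2a_1+2a_2))\bigr)$ coming from the number of run-patterns, uniformly over the range $0\le 2a_1+2a_2<n$. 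I expect this calibration of the exponents across all regimes of $(a_1,a_2)$ — few long runs, where $q$ is forced close to $n$ and the matching deviation is extreme, versus many shorter runs, where the pattern count grows but the checker counts $k_i$ are constrained by the contact spacing — to be the main obstacle; once it is checked exactly as in the wheel-amplifier analysis of \cite{BK01}, with the two length-$7n$ cycles and the cross-matching playing the roles there of the length-$14n$ cycle and its internal matching, summing over $(a_1,a_2,q)$ gives total failure probability $o(1)$, and the theorem follows.
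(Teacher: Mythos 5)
Your high-level decomposition is sound: the degree condition is deterministic, the crossing edges split cleanly into cycle-boundary edges (always an even number $2a_i$ on each cycle) and matching edges $m=k_1+k_2-2x$, and the tail bound $\Pr[x\ge x_0]\le \binom{k_1}{x_0}\binom{k_2}{x_0}\big/\binom{6n}{x_0}$ is correct (it is a first-moment bound over $x_0$-subsets of $U$-checkers on one side that all land in $U$ on the other). However, you leave the heart of the argument unverified. The claim that summing this tail estimate against the run-pattern count $4\binom{7n}{2a_1}\binom{7n}{2a_2}$ gives total failure probability $o(1)$ ``exactly as in [BK01]'' is asserted, not proved, and it is a genuine gap: the bipartite matching between two length-$7n$ cycles is a structurally different random object from the internal matching on $12n$ checkers of a single $14n$-cycle, so the exponents on both sides of the union bound change and the calibration you yourself flag as ``the main obstacle'' must actually be redone, not imported. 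You also do not restrict to minimal bad sets, which is how [BK01] tames the combinatorics and which the paper uses to extract the constraint $c\le u/6$ it later needs.

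The paper takes a different and more economical route precisely to avoid re-deriving the summation. It writes down the exact crossing-edge probability $P'(u,c)$ for a balanced cut in the bipartite setting (which equals your hypergeometric formula), compares it via Stirling to the standard-wheel probability $P(u,c)$ of [BK01], shows the two agree up to polynomial factors, and then handles unbalanced cuts by a ratio argument proving that $P''(u,c,k)$ is maximized at $k=0$. This reduces the whole question to the already-verified [BK01] summation, so no fresh calibration is needed. Your route could in principle be completed — and the paper's comparison essentially certifies that your exponents would work out — but as written the key quantitative step is exactly the one you omit, and exactly the one the paper is careful about.
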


\begin{proof}

Exploiting the similarity between bi-wheels and the standard wheel amplifiers of
\cite{BK01}, we will essentially reuse the proof given there. First, some
definitions: We say that $U$ is a \emph{bad set} if the size of its cut is too small,
violating the second property of amplifiers. We say that it is a \emph{minimal
bad set} if $U$ is bad but removing any vertex from $U$ gives a set that is not
bad.

Recall the strategy of the proof from \cite{BK01}: for each partition of the
vertices into $U$ and $V(G)\backslash U$, they calculate the probability (over the random
matchings) that this partition gives a minimal bad set. Then, they take the sum
of these probabilities over all potentially minimal bad sets and prove that the
sum is at most $\gamma^{-n}$ for some constant $\gamma<1$. It follows by union
bound that with high probability, no set is a minimal bad set and therefore, the
graph is a proper amplifier.

Our first observation is the following: consider a wheel amplifier on $14n$
vertices where, rather than selecting uniformly at random a perfect matching
among the checkers, we select uniformly at random a perfect matching from
checkers with labels in the set $\{1,\ldots,7n-1\}$ to checkers with labels in
the set $\{7n+1,\ldots,14n-1\}$. This graph is almost isomorphic to a bi-wheel.
More specifically, for each bi-wheel, we can obtain a graph of this form by
rewiring two edges, and vice-versa. It easily follows that properties that hold
for this graph, asymptotically with high probability also hold for the bi-wheel.

Thus, we just need to prove that a wheel amplifier still has the amplification
property if, rather than selecting a random perfect matching, we select a
random matching from one half of the checker vertices to the other. We will
show this by proving that, for each set of vertices $S$, the probability that
$S$ is a minimal bad set is roughly the same in both cases. After establishing
this fact, we can simply rely on the proof of \cite{BK01}.

Recall that the wheel has $12n$ checker vertices. Given a set $S$ with $|S|=u$,
what is the probability that exactly $c$ edges have exactly one endpoint in
$S$? In a standard wheel amplifier the probability is
$$ 
P(u,c) = {u \choose c} {12n-u \choose c} \frac
{c!(u-c)!!(12n-u-c)!!}{(12n)!!},
$$
where we denote by $n!!$ the product of all odd natural numbers less than or
equal to $n$, and we assume without loss of generality that $u-c$ is even. Let
us explain this: the probability that exactly $c$ edges cross the cut in this
graph is equal to the number of ways we can choose their endpoints in $S$ and
in its complement, times the number of ways we can match the endpoints, times
the number of matchings of the remaining vertices, divided by the number of
matchings overall.

How much does this probability change if we only allow matchings from one half
of the checkers to the other? Intuitively, we need to consider two
possibilities: one is that $S$ is a balanced set, containing an equal number of
checkers from each side, while the other is that $S$ is unbalanced. It is not
hard to see that if $S$ is unbalanced, then, we can easily establish that the cut
must be large. Thus, the main interesting case is the balanced one (and we will
establish this fact more formally).

Suppose that $|S|=u$ and $S$ contains exactly $u/2$ checkers from each side.
Then the probability that there are exactly $c$ edges crossing the cut is 
$$ 
P'(u,c) = { \frac{u}{2} \choose \frac{c}{2} }^2 { \frac{12n-u}{2} \choose
\frac{c}{2} }^2 \left( \frac{c}{2}\right)!^2
\frac{\left(\frac{u-c}{2}\right)!\left(\frac{12n-u-c}{2}\right)!}{(6n)!} 
$$
Let us explain this. If $S$ is balanced and there are $c$ matching edges with
exactly one endpoint in $S$, then, exactly $c/2$ of them must be incident on a
vertex of $S$ on each side, since the remaining vertices of $S$ must have a
perfect matching. Again, we pick the endpoints on each side, and on the
complement of $S$, select a way to match them, select matchings on the
remaining vertices and divide by the number of possible perfect matchings.

Using Stirling formulas, it is not hard to see that $\left(\frac{n}{2}\right)!^2
= \Theta(n! 2^{-n} \sqrt{n})$. Also $n!! =
\Theta(\left(\frac{n}{2}\right)!2^{n/2})$. It follows that $P'$ is roughly the
same as $P$ in this case, modulo some polynomial factors which are not
significant since the probabilities we are calculating are exponentially small.

Let us now also show that if $S$ is unbalanced, the probability that it is a
minimal bad set is even smaller. First, observe that if $S$ is a minimal bad
set whose cut has $c$ edges, we have  $c\le u/6 $. The reason for this is that
since $S$ is bad, then, $c$ is smaller than the number of contacts in $S$ minus
the number of cycle edges cut. It is not hard to see that, in each fragment,
that is, each subset of $S$ made up of a contiguous part of the cycle, two
cycle edges are cut. Thus, the extra edges we need for the contacts the
fragment contains are at most $1/6$ of its checkers.

Suppose now that $S$ contains $u/2+k$ checkers on one side and
$u/2-k$ checkers on the other. The probability that $c$ matching edges
have one endpoint in $S$ is
$$
P''(u,c,k) = 
{\frac{u}{2}+k\choose \frac{c}{2}+k }
{\frac{u}{2}-k\choose \frac{c}{2}-k }
{\frac{12n-u}{2}-k \choose \frac{c}{2}-k }
{\frac{12n-u}{2}+k \choose \frac{c}{2}+k }
\left( \frac{c}{2}+k\right)!
\left( \frac{c}{2}-k\right)!
\frac{\left(\frac{u-c}{2}\right)!\left(\frac{12n-u-c}{2}\right)!}{(6n)!}
$$
The reasoning is the same as before, except we observe that we need to select
more endpoints on the side where $S$ is larger, since after we remove checkers
matched to outside vertices $S$ must have a perfect matching. Observe that for
$k=0$ this gives $P'$. We will show that for the range of values we care about
$P''$ achieves a maximum for $k=0$, and can thus be upper-bounded by
(essentially) $P$, which is the probability that a set is bad in the standard
amplifier. The rest of the proof follows from the argument given in \cite{BK01}. In
particular, we can assume that $k\le c/2$, since $2k$ edges are cut
with probability 1.
To show that the maximum is achieved for $k=0$, we look at
$P''(u,c,k+1)/P''(u,c,k)$. We will show that this is less than 1. Using
the identity ${n+1\choose k+1}/{n\choose k}=\frac{n+1}{k+1}$, we get
$$
\frac{P''(u,c,k+1)}{P''(u,c,k)} = 
\left( 1 - \frac{2k+1}{\frac{c}{2}+k+1}\right) 
\left( 1 + \frac{2k+1}{\frac{u}{2}-k}\right) 
\left( 1 + \frac{2k+1}{\frac{12n-u}{2}-k}\right) 
$$
Using the fact that $1+x<e^x$, we end up needing to prove the following. 

\begin{equation}
\label{equ:1}
\frac{2k+1}{\frac{c}{2}+k+1} > \frac{2k+1}{\frac{u}{2}-k} +
\frac{2k+1}{\frac{12n-u}{2}-k} 
\end{equation}
Combining  that without loss of
generality $u\le 6n$ holds  with the bounds of $c$ and $k$ we have already mentioned,
the inequality~$(\ref{equ:1})$ is straightforward to establish.
\end{proof}

\section{Hybrid Problem}
\label{sec:hybrid}
By using the bi-wheel amplifier from  the previous section,
we are going to prove hardness of approximation for a bounded occurrence CSP
with very special properties. This particular CSP will be   
 well-suited for constructing a reduction to the TSP given in the next section.

As the starting point of our reduction, 
we  make use of the inapproximability result 
 due to H\aa stad~\cite{H01} for the 
\maxlin, which is defined as follows: 
Given  a
system  $I_1$  of  linear equations mod 2, in which
each equation is of the form $x_i\oplus x_j \oplus x_k = b_{ijk}$ with $b_{ijk}
\in \{0,1\}$, we want to find an assignment to the variables of $I_1$ such as
to maximize the number of satisfied equations.

Let $I_1$ be an instance of the \maxlin~and
 $\{x_i\}^\nu_{i=1}$  the set of variables, 
that appear in 
$I_1$. We   denote by $d(i)$
the number of appearances of $x_i$ in $I_1$. 

\begin{theorem}[H\aa stad \cite{H01}]

For every $\epsilon>0$, there exists a constant  $B_\eps$ such that 
given an instance $I_1$ of the \maxlin~with 
$m$ equations and $\max_{i \in [\nu]} d(i)\leq B_\eps$,  
it is \np-hard to decide whether there is an assignment that 
leaves at most $\eps \cdot m $
 equations unsatisfied, or all assignment leave at least 
$(0.5-\epsilon)m$ equations unsatisfied.

\end{theorem}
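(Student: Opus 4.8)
The plan is to derive this statement from H\aa stad's original inapproximability result for the \maxlin~\cite{H01}, which carries no occurrence bound, by controlling the instance that feeds into his long-code reduction. One is first tempted to simply sparsify a hard instance of \maxlin: replace each variable $x_i$ by $d(i)$ fresh copies, one per occurrence, and tie the copies together with equality constraints placed along the edges of a constant-degree expander. Majority rounding together with the edge expansion of these auxiliary graphs would indeed bound all occurrences by a constant. The trouble is that this inserts $\Theta(m)$ new, always-satisfiable constraints and therefore shrinks both the completeness and the soundness fractions by the same constant factor; in particular the soundness threshold would degrade from $(0.5-\eps)m$ to only $(0.5-\delta_0)m$ for some fixed $\delta_0>0$ depending on the expander degree. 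So instead I would go back to the source.

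Recall that H\aa stad obtains hardness for the \maxlin~via a long-code reduction from a two-prover one-round (Label Cover) instance $\mathcal{L}$ of soundness $\delta(\eps)$ over a label set $\Sigma$ whose size depends only on $\eps$: the variables of $I_1$ are pairs consisting of a vertex of $\mathcal{L}$ and a point of that vertex's long code, and each equation is generated by sampling a random edge of $\mathcal{L}$ and a random long-code test on it. The key point is that the completeness and soundness guarantees of the reduction are \emph{per-test} quantities --- expectations over a uniformly random equation --- so they are completely insensitive to the number of equations. Hence the only thing that controls $d(i)$ is purely combinatorial: the number of tests touching a fixed long-code point of a vertex $v$ is at most $\deg_{\mathcal{L}}(v)$ times the number of long-code tests supported on a single edge, and the latter is bounded by a function of $|\Sigma|$, hence of $\eps$. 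Therefore, if $\mathcal{L}$ is taken to have bounded degree $D(\eps)$, every variable of $I_1$ occurs at most $B_\eps$ times for some $B_\eps$ depending only on $\eps$, while the completeness and soundness thresholds $\eps m$ and $(0.5-\eps)m$ remain exactly those produced by H\aa stad's analysis.

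It therefore remains to supply a bounded-degree Label Cover instance with arbitrarily small soundness, which is standard. Start from gap-$3$SAT with a bounded number of occurrences per variable (its \np-hardness follows from the PCP theorem followed by an expander-based degree reduction, cf.~\cite{PY93}; since the gap $1-\eps_0$ is an absolute constant, losing another constant factor in it is harmless), regard it as a Label Cover instance of constant degree $D_0$, constant label-set size, and soundness $1-\eps_0$, and apply Raz's parallel repetition $k=k(\eps)$ times. Parallel repetition of a $D_0$-regular game is $D_0^{k}$-regular --- still of bounded degree for any fixed $k$ --- and its soundness decays exponentially in $k$, hence can be pushed below $\delta(\eps)$ by choosing $k$ large enough as a function of $\eps$. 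Feeding this instance into H\aa stad's reduction proves the theorem.

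The steps I would be most careful about are (i) verifying that the expander-based degree reduction for $3$SAT genuinely preserves a positive constant gap, and (ii) confirming that parallel repetition keeps both the left- and the right-degree bounded, since it is the latter that keeps $B_\eps$ finite. Neither point needs a new idea; the entire argument amounts to composing known black boxes in the right order while tracking how $B_\eps$, the label-set size, and the number of repetitions end up as functions of $\eps$ alone.
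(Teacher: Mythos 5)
The paper does not actually prove this statement: it is presented as a cited theorem attributed to H{\aa}stad~\cite{H01} (the bounded-occurrence version being a standard folklore strengthening that also underlies the earlier Berman--Karpinski line of work~\cite{BK99,BK01,BK03}). So there is no internal proof in the paper to compare your argument against, and what you have produced is a reconstruction of why the citation is legitimate.

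On its own merits your outline is sound and correctly identifies the crux. You are right to reject the naive post hoc sparsification via expanders: adding $\Theta(m)$ always-satisfiable equality constraints dilutes the soundness fraction by the degree factor and kills the sharp $0.5-\eps$ threshold, which is exactly why the present paper's bi-wheel construction has to track the weaker per-equation-type gap in its Hybrid problem rather than a gap measured against the total equation count. Going back inside H{\aa}stad's reduction and controlling the degree of the underlying two-prover game is the right move, and your accounting is correct: the occurrence count of each long-code variable is at most the Label Cover degree times the number of tests per edge, both of which are functions of $|\Sigma|$ and hence of $\eps$; parallel repetition of a $D_0$-regular game does give a $D_0^k$-regular game, so bounded degree survives repetition; and bounded-degree gap-$3$SAT as a seed is standard.

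One step you pass over, and which you should make explicit, is the conversion from H{\aa}stad's \emph{weighted} system to the unweighted one that the theorem statement presupposes. The long-code test picks a noise pattern $\mu$ with i.i.d.\ $\eps$-biased bits, so equations are generated with unequal probabilities; to turn this into an unweighted system with a per-variable occurrence bound, you need that all test probabilities lie within a bounded ratio of one another (they do, the ratio being at most $((1-\eps)/\eps)^{|\Sigma|}$, a function of $\eps$ only) and then duplicate each equation proportionally after rounding. This increases $B_\eps$ by a bounded factor but does not break the argument. With that caveat noted, the composition of black boxes you describe does yield the theorem as stated.
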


Similarly to the work by Berman and Karpinski~\cite{BK99}
(see also \cite{BK01} and \cite{BK03}), 
we will  reduce the number of
occurrences of each variable to $3$. 
For this, we will use  
 our  amplifier construction to create 
 special instances of the Hybrid problem,
 which is defined as follows:
 Given a system $I_2$ of linear equations mod $2$
with either three or two variables in each equation,
 we want to 
find an assignment 
such as to maximize the number of satisfied equations.

In particular,
we are going to prove the following theorem.

\begin{theorem}
\label{thm:hybrid} For every constant $\eps > 0$ and $b\in \{0,1\}$, there
exist instances of the Hybrid problem with  $31m$ equations  such that: $(i)$
Each variable occurs exactly three times.  $(ii)$ $21m$ equations are of the
form $x \oplus y =0$, $9m$ equations are of the form $x \oplus y =1 $ and $m$
equations are of the form $x\oplus y \oplus z = b$ .  $(iii)$ It is \np-hard to
decide whether there is an assignment to the variables that leaves at most
$\eps \cdot m$ equations unsatisfied, or  every assignment to the variables
leaves at least $(0.5- \eps)m$ equations unsatisfied.  \end{theorem}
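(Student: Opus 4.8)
The plan is to reduce the bounded-occurrence \maxlin\ of H\aa stad~\cite{H01} to the Hybrid problem, using one bi-wheel amplifier per variable: cycle edges of the bi-wheels will carry the $x\oplus y=0$ equations, matching edges the $x\oplus y=1$ equations, and each original linear equation will contribute a single $x\oplus y\oplus z=b$ equation on three contact vertices. Before building the amplifiers I would do two preliminary normalizations on the given instance $I_1$. First, replace $I_1$ by a constant number of disjoint copies of itself together with $O(1)$ trivially satisfiable fresh equations; this preserves the completeness/soundness gap (after rescaling $\eps$) and lets me assume that every variable $x_i$ occurs an \emph{even} number $d(i)$ of times and that a global parity identity holds. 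Since $\sum_i d(i)=3m$ for the resulting $m$ equations, this is the $m$ of the statement. Second, fix a \emph{signing}: a value $s_{e,i}\in\{0,1\}$ for each occurrence of $x_i$ in an equation $e$ (with right-hand side $b_e$) such that $s_{e,i}\oplus s_{e,j}\oplus s_{e,k}=b_e\oplus b$ for every equation $e=\{x_i,x_j,x_k\}$, and such that exactly $d(i)/2$ occurrences of each $x_i$ receive the value $1$. Such a signing exists by an exchange argument: satisfy the per-equation parities greedily, then rebalance the per-variable counts by repeatedly flipping two of the three occurrence-bits inside a common equation (which preserves all parities) and routing surpluses along the variable co-occurrence graph; the parity identity arranged above is exactly what guarantees the rebalancing can be completed.

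For the construction, for each variable $x_i$ take a bi-wheel amplifier $B_i$ with $d(i)$ contact vertices, i.e.\ two cycles $C_i^+,C_i^-$ on $7d(i)/2$ vertices with a uniformly random perfect matching between their checkers; by Theorem~\ref{thm:biwheel} this is a $3$-regular amplifier with high probability (and the construction can be derandomized). Put an equation $x\oplus y=0$ on every cycle edge, an equation $x\oplus y=1$ on every matching edge, match the $d(i)/2$ occurrences of $x_i$ with $s_{e,i}=0$ bijectively to the contacts of $C_i^+$ and the remaining occurrences to the contacts of $C_i^-$, and for every original equation $e=\{x_i,x_j,x_k\}$ add the equation $y_i\oplus y_j\oplus y_k=b$ on the three contacts representing its occurrences. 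Counting edges gives $\sum_i 7d(i)=21m$ equations of the form $x\oplus y=0$, $\sum_i 3d(i)=9m$ of the form $x\oplus y=1$, and $m$ of the form $x\oplus y\oplus z=b$, hence $31m$ in total; and each new variable occurs exactly three times, since a contact lies on two cycle edges and one triangle equation while a checker lies on two cycle edges and one matching edge. This establishes $(i)$ and $(ii)$.

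For completeness, lift an assignment of $I_1$ by setting all of $C_i^+$ to the value $p_i$ of $x_i$ and all of $C_i^-$ to $\bar p_i$: every equality and inequality equation of every $B_i$ is satisfied, and the triangle equation of $e$ becomes $p_i\oplus p_j\oplus p_k=b_e$ by the defining property of the signing, so the number of unsatisfied equations is exactly preserved. For soundness I would prove the consistency lemma that an optimum of the Hybrid instance may be taken \emph{canonical}, i.e.\ constant on each $C_i^+$ and equal to the complement on each $C_i^-$. Given an arbitrary assignment $\alpha$, let $D_i\subseteq V(B_i)$ be the set of vertices where $\alpha$ disagrees with the canonical assignment having $p_i=0$; one checks that an equality or inequality equation of $B_i$ is falsified precisely when its edge crosses the cut $(D_i,V(B_i)\setminus D_i)$, so $\alpha$ falsifies $|E(D_i,V(B_i)\setminus D_i)|$ of them, which by the amplifier property is at least $\min\{|D_i\cap X_i|,|X_i\setminus D_i|\}$, where $X_i$ is the contact set of $B_i$. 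Re-canonicalizing $B_i$ toward whichever side contains fewer of its contacts flips at most that many contacts, hence falsifies at most that many triangle equations while repairing at least that many equations inside $B_i$; thus the value does not drop, while the potential $\sum_i\min\{|D_i|,|V(B_i)\setminus D_i|\}$ strictly decreases. Iterating over the vertex-disjoint bi-wheels yields a canonical optimum, which induces an assignment $p$ of $I_1$ with the same number of unsatisfied equations. Combining this with completeness and the gap of \cite{H01} gives $(iii)$.

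The step I expect to be the crux is the consistency lemma, and specifically the accounting inside it: one must ensure that, side by side, the equations repaired inside a bi-wheel always dominate the triangle equations broken at its contacts, which is precisely what the amplifier inequality $|E(D_i,V(B_i)\setminus D_i)|\ge\min\{|D_i\cap X_i|,|X_i\setminus D_i|\}$ delivers. This is also the point where a \emph{bi}-wheel is needed rather than a single wheel: because a matching edge joins the two cycles, a canonical assignment sets its two endpoints to complementary values and therefore satisfies all matching equations, so the matching edges can safely carry the (cheaper, for the later TSP gadgets) inequality constraints. The normalization and the balanced signing are more routine, but I would carry them out first, since getting the occurrence counts to come out exactly as $21m$, $9m$ and $m$ relies on every $d(i)$ being even.
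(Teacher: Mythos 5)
You follow the same high-level route as the paper: one bi-wheel amplifier per variable, equality constraints on cycle edges, inequality constraints on matching edges, the original three-variable equations placed on contact triples, and a cut-counting consistency lemma to argue soundness. Your explicit re-canonicalization argument for the consistency lemma (falsified bi-wheel equations are exactly the cut edges of the disagreement set $D_i$, so by the amplifier inequality $|E(D_i,\bar D_i)|\ge\min\{|D_i\cap X_i|,|X_i\setminus D_i|\}$ the repaired equations dominate the broken triangle equations, with $\sum_i\min\{|D_i|,|V(B_i)\setminus D_i|\}$ as a potential) is a correct and welcome fleshing-out of what the paper simply calls ``standard arguments''; the count $21m$/$9m$/$m$ and the degree check also match.

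Where the proposal diverges, and where I'd push back, is the preprocessing. You introduce an abstract ``signing'' $s_{e,i}$ with per-equation parity $s_{e,i}\oplus s_{e,j}\oplus s_{e,k}=b_e\oplus b$ and per-variable balance $\sum_{e\ni i}s_{e,i}=d(i)/2$, and you claim existence via an exchange argument after first forcing every $d(i)$ to be even by taking ``disjoint copies of $I_1$ plus $O(1)$ trivially satisfiable fresh equations.'' Both steps are shaky as stated: disjoint copies leave each $d(i)$ unchanged, and any fresh three-variable equation introduces new variables of odd degree, so evenness is not obviously achievable this way; and the rebalancing-by-flipping-pairs step is a flow-type argument whose termination and feasibility you only gesture at. The paper sidesteps all of this with a one-line trick: first flip literals so that every equation reads $x\oplus y\oplus z=b$, then add, for each equation, three more copies in which the three possible pairs of literals are negated. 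This quadruples $m$, automatically makes each $d(i)$ even, and guarantees every variable appears equally often negated and unnegated, so the ``signing'' degenerates to ``unnegated occurrences to one cycle, negated occurrences to the other''---no exchange argument needed, and the gap rescales transparently. I would replace your normalization paragraph with that duplication trick; the rest of your proof then goes through essentially as the paper's does.
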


\begin{proof}
Let  $\epsilon>0$ be a constant  and $I_1$ an instance of the 
\maxlin~with  $\max_{i \in [\nu]} d(i) \leq B_\eps$.
For a fixed $b\in \{0,1\}$, we can flip
some of the  literals such that
 all equations in the instance $I_1$ are 
of the form $x\oplus y \oplus z= b$, where $x, y, z $ are variables 
or negations. 
By constructing three more copies of each equation,
in which all possible pairs of literals appear negated, 
we may assume that each variable occurs 
the same number of times negated as unnegated.

Let us fix a variable $x_i$ in $I_1$.
Then, we 
create $7\cdot d(i)= 2\cdot \al$ new variables 
$Var(i)=\{x^{ui}_j, x^{ni}_j  \}^\al_{j=1}$.
In addition, we construct a  bi-wheel amplifier $W_i$ on $2\cdot \al$ vertices
(that is, a bi-wheel with $d(i)$ contact vertices) with the properties
described in Theorem \ref{thm:biwheel}.  Since $d(i) \leq B_\eps$ is a
constant, this can be accomplished in constant time.  In the remainder, we
refer to \emph{contact} and \emph{checker} variables as the elements in
$Var(i)$, whose corresponding index is a contact and checker vertex in $W_i$,
respectively.  We denote by $M(W_i)\subseteq E(W_i)$ the associated perfect
matching on the set of checker vertices  of $W_i$. In addition, we denote by
$C_{n}(W_i)$ and $C_{u}(W_i)$ the set of edges contained in the first and
second cycle of $W_i$, respectively. 

Let us now define the equations of the corresponding instance
of the Hybrid problem.  
For each edge $\{j,k\} \in M(W_i)$, 
we create  the equation $x^{ui}_j \oplus 
x^{ni}_k = 1$ and refer to equations of this form  as \emph{matching equations}. 
On the other hand, for each edge $\{l,t\}$  in the cycle  $C_{q}(W_i)$
with $q\in \{u,n\}$, we introduce the equation $x^{qi}_l \oplus x^{qi}_t = 0$.
Equations of this form  will be called \emph{cycle equations}.
Finally, we replace the $j$-th unnegated appearance of $x_i$ 
in $I_1$ by the contact variable $x^{ui}_\lambda$ with $\lambda =7 \cdot j $, whereas
the $j$-th negated appearance is replaced by $x^{ni}_{\lambda}$.
The former construction yields  $m$ equations with three variables in 
the instance of the Hybrid problem, which we will denote by $I_2$.   
Notice that each variable appears in exactly $3$ equations in $I_2$.
Clearly, we have  $|I_2|=31m$ equations, thereof   $9m$ matching  equations, 
$21m$ cycle equations and $m$ equations of the form $x\oplus y \oplus z=b$.  

A consistent assignment to $Var(i)$ is an assignment with 
 $x^{ui}_j=b$ and  $x^{ni}_j=(1-b)$ for all $j\in [\al]$, where  $b\in \{0,1\}$. 
A consistent assignment to the variables of $I_2$ 
is an assignment that is consistent to $Var(i)$
for all $i\in [\nu]$.
 By standard
arguments using the amplifier constructed in Theorem~\ref{thm:biwheel}, 
it is possible to convert an assignment to a consistent assignment 
without decreasing the number of satisfied equations and the proof
of Theorem~\ref{thm:hybrid} follows. 
\end{proof}

\section{TSP}
This section is devoted to  the proof of the following theorem.

	\begin{theorem}
	\label{thm:tspmain}

It is \np-hard to approximate the TSP to within any constant approximation 
ratio less than $123/122$. 

\end{theorem}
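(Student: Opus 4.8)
The reduction starts from the Hybrid instance $I_2$ produced by Theorem~\ref{thm:hybrid}: $31m$ equations over variables each occurring exactly three times, namely $21m$ equality constraints $x\oplus y=0$ (the cycle equations of the bi-wheels $W_i$), $9m$ inequality constraints $x\oplus y=1$ (the matching equations), and $m$ ternary constraints $x\oplus y\oplus z=b$. From $I_2$ I would build an undirected weighted graph $G$ together with a target cost $C\cdot m$ so that (quasi-)tours of $G$ correspond to assignments of $I_2$ and the cost grows by a fixed penalty for every violated equation. The one new ingredient, as announced in the introduction, is that only the $9m$ inequality equations and the $m$ ternary equations get an explicit gadget, mostly a variation of the $2$- and $3$-variable TSP gadgets already in the literature; the $21m$ cycle equations are realized \emph{for free}. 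Concretely, the terminals of consecutive gadgets along the two cycles of each $W_i$ are wired directly by edges of $G$, so that a cheap tour must traverse each cycle of $W_i$ as a single coherent ``wire''. Since an inequality gadget admits only a constant number of canonical cheap traversals, one per legal pair of values of its two incident variables, threading such gadgets around a cycle forces all cycle variables of that bi-wheel to carry a common value — exactly simulating the equations $x\oplus y=0$ at zero gadget cost. The contacts of the first cycle of $W_i$ carry the unnegated appearances of $x_i$ and those of the second cycle the negated ones, so a ternary gadget is simply attached at the appropriate contact terminals.

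\textbf{Completeness.} Given an assignment to $I_2$ violating at most $\eps m$ equations, I would first use the consistency argument already invoked in Theorem~\ref{thm:hybrid} to assume the assignment is consistent on every $Var(i)$; then all $21m$ cycle equations hold and at most $\eps m$ of the remaining $9m+m$ equations fail. Each gadget has a designated optimal traversal determined by the values of its variables; concatenating these along the bi-wheel cycles and closing up the global structure yields a tour, and each of the $\le\eps m$ violated gadget equations is repaired by a bounded-cost local detour. This gives a tour of cost at most $C\cdot m + c_1\eps m$, where $C$ is an explicit constant fixed by the gadget sizes and the number of wiring edges.

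\textbf{Soundness.} Conversely, from an arbitrary quasi-tour $E_T$ of $G$ I would normalize gadget by gadget: replace the restriction of $E_T$ to each gadget by a canonical traversal that is compatible and no more expensive, via local surgery, and absorb the $2(con_T-1)$ component-merging term built into the quasi-tour cost — here one must check, as usual, that breaking into many components never beats the canonical solution, which is exactly the purpose of that penalty. The normalized tour reads off a consistent assignment to $I_2$ (incoherence around a bi-wheel cycle strictly raises the cost), and its cost is at least $C\cdot m + c_2 u - c_1\eps m$, where $u$ is the number of equations of $I_2$ violated by that assignment and $c_2>0$ is the per-violation penalty of the gadgets. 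Since Theorem~\ref{thm:hybrid} guarantees $u\ge(0.5-\eps)m$ for every assignment in the hard ``no'' instances, every quasi-tour of $G$ costs at least $C\cdot m + c_2(0.5-\eps)m - c_1\eps m$.

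\textbf{Conclusion and main difficulty.} Comparing the two cases, the ratio of the ``no''-optimum to the ``yes''-optimum tends to $1 + \frac{c_2/2}{C}$ as $\eps\to 0$, so choosing the gadget weights so that $c_2/(2C)=1/122$ makes a $(123/122)$-approximation for TSP decide the promise problem, whence \np-hardness of approximating TSP within any constant ratio below $123/122$. The delicate step is the soundness normalization: one must rule out that an adversarial quasi-tour does better than a canonical one — by traversing a gadget illegally, by exploiting many weakly connected components, or by being inconsistent around a bi-wheel — while keeping the completeness cost $C$ small relative to the per-violation penalty $c_2$. It is precisely the free simulation of the $21m$ equality constraints, enabled by the bi-wheel amplifier of Theorem~\ref{thm:biwheel}, that shrinks $C$ enough to improve the ratio from the previous $185/184$ down to $123/122$.
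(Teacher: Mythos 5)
Your proposal follows essentially the same approach as the paper: reduce from the Hybrid instance of Theorem~\ref{thm:hybrid}, construct gadgets only for the $9m$ matching equations and the $m$ size-three equations while realizing the $21m$ cycle equations ``for free'' by wiring consecutive gadget terminals, and compare yes/no optima via a local cost/credit analysis. The paper's Lemmata~\ref{lem:tsp1} and~\ref{lem:tsp2} execute your outline --- a credit-based honest/dishonest case analysis replaces the literal ``local surgery'' you describe, and the explicit gadget weights give base cost $61m$ with a unit per-violation penalty, so the ratio is $1+1/122=123/122$.
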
			

Let us first sketch the high-level idea of the construction. Starting with an
instance of the Hybrid problem, we will construct a graph, where gadgets
represent the equations. We will design gadgets for equations of size three
(Figure \ref{fig:eqn3}) and for equations of size two corresponding to
\emph{matching} edges of the bi-wheel (Figure \ref{fig:eqn2}). We will not
construct gadgets for the cycle edges of the bi-wheel; instead, the connections
between the matching edge gadgets will be sufficient to encode these extra
constraints. This may seem counter-intuitive at first, but the idea here is
that if the gadgets for the matching edges are used in a consistent way (that
is, the tour enters and exits in the intended way) then it follows that the
tour is using all edges corresponding to one wheel and none from the other.
Thus, if we prove consistency for the matching edge gadgets, we implicitly get
the cycle edges ``for free''. This observation, along with an improved gadget
for size-three equations and the elimination of the variable part of the graph,
are the main sources of improvement over the construction of \cite{L12}.

\subsection{Construction} \label{sec:construction}
We are going to  describe the construction that encodes an instance $I_2$
 of the 
Hybrid problem into an instance of the TSP problem. 
Due to Theorem~\ref{thm:hybrid},
we may assume that the equations with three variables in  $I_2$ 
are all of the form $x\oplus y\oplus z=0 $.

In order to ensure that some edges  are
to be used at least once in any valid tour,  we apply the following simple 
 trick that was already used in the work by Lampis~\cite{L12}:
Let $e$ be an edge
  with weight $w$ that we want to be traversed by every tour. 
	 We remove $e$ and replace it with a path of $L$ edges
	and $L-1$ newly created vertices each of degree two, where we think of $L$ as
	a large constant.  Each of the $L$ edges has weight
 $w/L$ and any tour that fails to traverse at least  two  newly created
 edges is not connected. On the other hand, a
 tour that traverses all but one of those edges
  can be extended by adding two copies of the unused edge increasing  
	the cost of the underlying tour by a negligible value.   
In summary, we may assume that our construction
contains   \emph{forced} edges that need to be traversed 
 at least once by any tour. If $x$ and $y$ are vertices, which are connected by
a forced edge $e$, we write $\{x,y\}_F$ or simply $x-_F y$.  In the following,
we refer to unforced edges $e$ with $w(e)=1$ as \emph{simple}. All unforced
edges in our construction will be simple.

Let us start with the description of the corresponding  graph $G_S$: For each
bi-wheel $W_p$, we will construct the subgraph  $G^p$ of $G_S$. For each vertex
of the bi-wheel, we create a vertex in the graph and for each cycle equation $x
\oplus y =0$, we create a simple edge $\{x, y\}$.  Given a matching equation
between two checkers $x^u_i \oplus x^n_j=1$, we connect the vertices $x^u_i$
and $x^n_j$ with two forced edges $\{x^u_i, x^n_j\}^1_F$ and $\{x^u_i,
x^n_j\}^2_F$.  We have $w(\{x^u_i, x^n_j\}^i_F)=2$ for  each $i\in \{1,2\} $.

Additionally, we create a central vertex $s$ that is connected to gadgets
simulating equations with three variables. Let $x\oplus y\oplus z =0 $ be the
$j$-th equation with three variables in $I_2$.  We now create the graph
$G^{3S}_j$ displayed in Figure~\ref{fig:eqn3}~$(a)$, where the (contact)
vertices for $x,y,z$ have already been constructed in the cycles.  The edges
$\{\gamma^\alpha ,\gamma \}_F$ with $\alpha \in \{r,l\}$ and $\gamma \in
\{x,z,y\}$ are all forced edges with $w(\{\gamma^\alpha ,\gamma \}_F)=1.5$.
Furthermore, we have $w(\{e^\alpha_j, s\}_F)=0.5$ for all $\alpha \in \{r,l\}$.
$\{e^r_j, s\}_F$ and $\{e^l_j, s\}_F$ are both forced edges, whereas all
remaining edges of $G^{3S}_j$ are simple. This is the whole description of
$G_S$.
 
\begin{figure}[h]
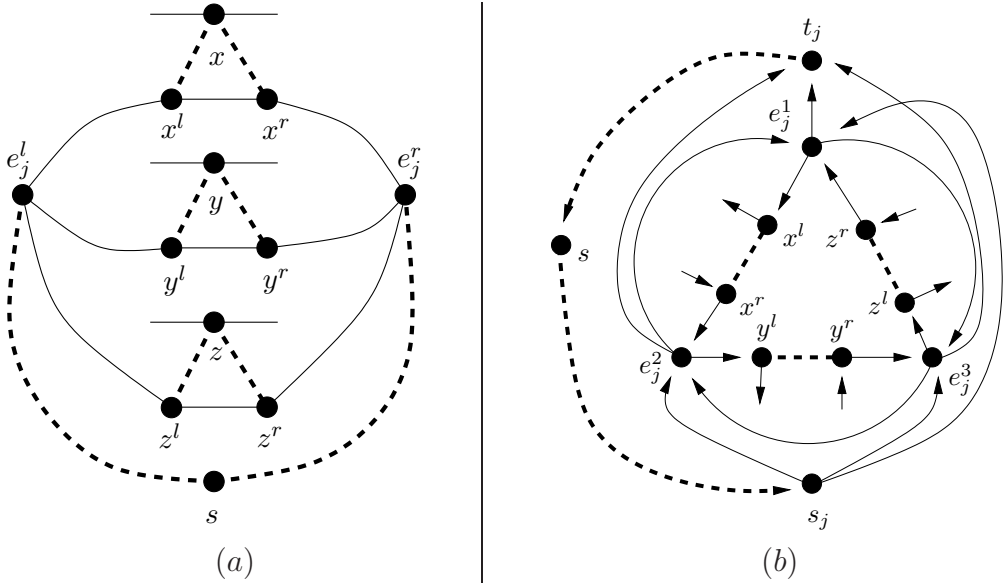

\begin{center}
\begin{tabular}[c]{c|c}
\input{figures/eq3.pspdftex}  &
\input{figures/eqwith3d.pspdftex}\\
$(a)$ & $(b)$
\end{tabular}
\end{center}
\caption{Gadgets simulating equations with three   variables  in the symmetric case $(a)$ and in the asymmetric case $(b)$. Dotted and straight lines  represent forced and  simple edges, respectively.}
\label{fig:eqn3}
\end{figure}
\subsection{Tour from Assignment}
Given an instance $I_2$ of the Hybrid problem and
an assignment $\phi$ to the variables in $I_2$, we are going to construct
a tour in $G_S$ according to $\phi$ and give the proof of one direction
of the reduction. In particular, we are going to prove the following
lemma. 

\begin{lemma}
\label{lem:tsp1}

If there is an assignment to the variables of a given instance $I_2$ of the
Hybrid problem with $31m$ equations and $\nu$ bi-wheels, that leaves $k$
equations  unsatisfied, then, there exists a tour in $G_S$ with cost at most
$61m+2\nu+k+2$.

\end{lemma}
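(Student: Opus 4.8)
The plan is to build the tour in three stages: normalise the assignment, lay down a local tour-fragment inside every gadget and every bi-wheel, and then glue the fragments, check balance and connectivity, and add up the weights. Given an assignment $\phi$ leaving $k$ equations unsatisfied, I would first invoke the amplifier argument from the proof of Theorem~\ref{thm:hybrid} to pass to a \emph{consistent} assignment $\phi'$ --- constant on the ``unnegated'' cycle of each bi-wheel $W_i$ and equal to its complement on the ``negated'' cycle --- without decreasing the number of satisfied equations. A consistent assignment automatically satisfies all $21m$ cycle equations (since $b\oplus b=0$) and all $9m$ matching equations (since $b\oplus(1-b)=1$), so after this step every unsatisfied equation is one of the $m$ size-three equations, and their number $k'$ obeys $k'\le k$. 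As $61m+2\nu+k+2$ is non-decreasing in $k$, it suffices to exhibit a tour of cost at most $61m+2\nu+k'+2$.

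\emph{Local fragments.} For a bi-wheel $W_i$, consistency singles out an \emph{active} cycle (value $1$) and an \emph{inactive} one (value $0$); the tour will use every cycle edge of the active cycle and no cycle edge of the inactive one, contributing $\alpha = 7d(i)/2$ to $W_i$ and $\tfrac{21}{2}m$ in total. For every matching equation both parallel weight-$2$ forced edges must be traversed; routing them back and forth keeps the two endpoints balanced, hangs every inactive-cycle checker off its active-cycle partner, and costs $4$ per matching equation, i.e.\ $36m$ altogether. For each size-three gadget $G^{3S}_j$ (Figure~\ref{fig:eqn3}$(a)$) the eight forced edges are always used, and the simple edges of the gadget are then selected, in the one way that balances all of its internal vertices, as a function of the induced values $\phi'(x),\phi'(y),\phi'(z)$ and of which of $x,y,z$ lie on active cycles. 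A finite case check over the eight value patterns should show that this selection costs a fixed amount when $x\oplus y\oplus z=0$ holds and one unit more when it fails, so summing over the $m$ gadgets contributes a fixed amount plus $k'$.

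\emph{Assembly, accounting, and the main obstacle.} The union $E_T$ of all these fragments is balanced at every vertex and spans $V(G_S)$, hence is a quasi-tour, and the per-gadget constants are to be chosen so that $\sum_{e\in E_T}w(e)$ comes out to exactly $61m+k'$. The remaining point --- which I expect to be the crux --- is connectivity: the forced skeleton of a size-three gadget need not connect its contact vertices $x,y,z$ to the central vertex $s$, so each bi-wheel's active cycle may fail to be attached to the $s$-component, and $E_T$ could a priori fall into up to $\nu+O(1)$ pieces. The argument then has to show that there are in fact no more than that --- essentially that the forced edges already glue everything together except for at most one gap per bi-wheel --- and repair the breaks by doubling one suitably cheap edge per surplus component, which adds at most $2\nu$ to the cost; a final additive $2$ absorbs the last reconnection together with the negligible rounding coming from the forced-path gadget. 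After these repairs $E_T$ is a genuine tour of cost at most $61m+2\nu+k'+2\le 61m+2\nu+k+2$. The two delicate steps are (a) producing a balanced gadget fragment of the advertised cost for \emph{every} value pattern, including those where some of $x,y,z$ sit on inactive cycles and thus carry no cycle edges, which is a routine but lengthy enumeration, and (b) bounding the global number of components by $\nu+O(1)$ rather than by something proportional to the number of gadgets --- this is exactly where the ``cycle edges for free'' idea of the construction has to be cashed in, and it is the step I would be most careful about.
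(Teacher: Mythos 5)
Your high-level plan matches the paper's proof: pass to a consistent assignment, always use the forced edges, select the cycle edges of the active cycle of each bi-wheel, route each size-three gadget by a case analysis on the values of its three contacts, and bound the number of components of the resulting quasi-tour by $\nu+1$. The connectivity step you single out as the hardest part is actually straightforward in the paper: the active cycle plus the $9m$ pairs of forced matching edges make each bi-wheel one connected piece that contains \emph{all} its checkers and the active contacts, while the gadget internals and the inactive contacts all attach to $s$ through forced edges together with the simple gadget edges one uses for $0$-valued contacts. So at most $\nu+1$ components arise, and since an unforced edge has weight $1$, doubling one to merge two components adds $2$ to the edge total while removing $2$ from the $2(\mathrm{con}_T-1)$ penalty --- the repair is net-free, not an extra $2\nu$ as you wrote; the $2\nu$ in the bound is the quasi-tour component penalty itself.

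The real gap is in step (a), not (b). The cost of the \emph{internal} simple edges of a size-three gadget is not a fixed amount for satisfied equations: the pattern $x+y+z=0$ uses all six edges $\{\gamma^\alpha,e^\alpha_j\}$ (cost $6$), while $x+y+z=2$ uses two edges $\{\gamma^l,\gamma^r\}$ and two edges $\{e^\alpha_j,\delta^\alpha\}$ (cost $4$); similarly the two unsatisfied patterns cost $6$ and $4$. So ``satisfied costs a fixed amount, unsatisfied one more'' is false at the level of raw gadget edge weight. The paper repairs this with a \emph{local edge cost} $c_T(V')=\sum_{u\in V'}\sum_{e\ni u}w(e)/2$, which charges half the weight of the two cycle edges incident on each active contact to the gadget set; this lifts both satisfied patterns to $15.5$ and both unsatisfied ones to $16.5$. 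You could instead argue globally: since each variable has $d(i)/2$ contacts on each side, exactly $\sum_i d(i)/2 = 3m/2$ contacts are active for \emph{any} consistent assignment, so $m_1+2m_2+3m_3=3m/2$ where $m_a$ counts gadgets with $a$ active contacts; combined with $\sum_a m_a=m$ this yields $3m_0+m_1=m_2+3m_3$, which is precisely the identity that makes $36m+10.5m+10m+\bigl(6(m_0+m_1)+4(m_2+m_3)\bigr)=61m+m_1+m_3$. Without one of these two devices, the summation at the end of your second paragraph does not come out to $61m+k'$ as claimed.
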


Before we proceed, let us give a useful definition. Let $G$ be an
edge-weighted graph and $E_T$ a multi-set of edges of $E(G)$ that defines a
quasi-tour. Consider a set $V'\subseteq V(G)$. The local edge cost of the set $V'$
is then defined as 

$$ c_T(V')=\sum_{u\in V'} \sum_{e\in E_T,\  e=\{u,v\}} \frac{w(e)}{2} $$

In words, for each vertex in $V'$, we count half the total weight of its
incident edges used in the quasi-tour (including multiplicities). Observe that
this sum contains half the weight of edges with one endpoint in $V'$ but the
full weight for edges with both endpoints in $V'$ (since we count both
endpoints in the sum). Also note that for two sets $V_1,V_2$, we have
$c_T(V_1\cup V_2)\le c_T(V_1)+c_T(V_2)$ (with equality for disjoint sets) and
that $c_T(V)=\sum_{e\in E_T}w(e)$.

\begin{proof}

First, note that it is sufficient to prove that we can construct a quasi-tour
of the promised cost which uses all forced edges exactly once. Since all
unforced edges have cost 1, if we are given a quasi-tour we can connect two
disconnected components by using an unforced edge that connects them twice
(this is always possible since the underlying graph we constructed is
connected).  This does not increase the cost, since we added two unit-weight
edges and decreased the number of components.  Repeating this results in a
connected tour.

Let $\{W_a\}^\nu_{a=1}$ be the associated set 
of  bi-wheels of $I_2$. For a fixed bi-wheel $W_p$,
let $\{x^u_i,x^n_i\}^{z}_{i=1}$ be its associated set of  variables.
Due to the construction of instances of the Hybrid
problem in Section~\ref{sec:hybrid}, we may assume that all equations with two
variables are satisfied by the given assignment.  Thus, we have $x^u_i \neq
x^n_j $, $x^u_i = x^u_j$ and $ x^n_i = x^n_j$ for all $i,j\in [z]$.

Assuming $x^\al_1=1$ for some $\al\in \{u,n\} $, we use once all simple edges
$\{x^\al_{i}, x^\al_{i+1}\}$ with $i \in [z-1] $ and the edge $\{x^\al_z,
x^\al_{1}\}$. We also use all forced edges corresponding to matching equations
once.  In other words, for each biwheel we select the cycle that corresponds to
the assignment 1 and use all the simple edges from that cycle.  This creates a
component that contains all checker vertices from both cycles and all contacts
from one cycle.


\begin{figure}[h]
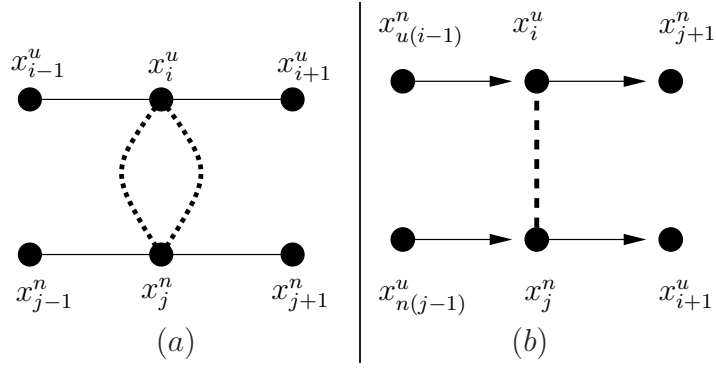

\begin{center}
\begin{tabular}[c]{c|c}
\input{figures/equwith2.pspdftex}  &
\input{figures/eqwith2d.pspdftex}\\
$(a)$ & $(b)$
\end{tabular}
\end{center}
\caption{Gadget simulating equation  with two variables in symmetric case $(a)$ and in the 
asymmetric case $(b)$.  Dotted and straight lines  represent forced and  simple edges, respectively.}
\label{fig:eqn2}
\end{figure}

As for the next step, we are going to describe the tour traversing
$G^{3S}_j$ with $j\in [m]$ given an assignment to contact variables. Let us assume that 
$G^{3S}_j$ simulates $x\oplus y \oplus z=0$. According to the assignment 
to $x$, $y$ and $z$, we will traverse $G^{3S}_j$ as follows:
In all cases, we will use all forced edges once.\\
\noindent
\textbf{Case ($x+y+z=2$): }
Then, we use $\{\gamma^l,\gamma^r\}$
 for all   $\alpha\in \{r,l\}$
and $\gamma \in \{x,y,z\}$ with $\gamma=1$. For $\delta \in \{x,z,y\}$ with
$\delta =0$, we use $\{e_j^\alpha,\delta^\al\}$  for all $\alpha\in \{r,l\}$.\\
\textbf{Case ($x+y+z=b$ with $b\in \{0,1\}$):} In both cases, we traverse
$\{\gamma^\alpha,e^\alpha_j\}$  for all $\gamma\in \{x,y,z\}$ and $ \alpha \in
\{r,l  \}$. \\ \textbf{Case ($x+y+z=3$):} We use $\{\gamma^r,\gamma^l\}$ with
$\gamma\in \{y,z\}$. Furthermore, we include $\{x^\alpha ,e^\alpha_j\}$  for
both $\alpha\in \{r,l\}$. 


Let us now analyze the cost of the edges of our quasi-tour given an assignment.
For each matching edge $\{x_i^u,x_j^n\}$ consider the set of vertices made up
of its endpoints. Its local cost is 5: we pay 4 for the forced edges and there
are two used simple edges with one endpoint in the set. Let us also consider
the local cost for a size-three equation gadget, where we consider the set to
contain the contact vertices $\{x,y,z\}$ as well the other 8
vertices of the gadget. The local cost here is 9.5 for the forced edges. We also
pay 6 more (for a total of 15.5) when the assignment satisfies the equation or 7
more when it does not.

Thus, we have given a covering of the vertices of the graph by $9m$ sets of
size two, $m$ sets of size 11 and $\{s\}$. The total edge cost is thus at most $5\cdot
9m+ 15.5\cdot m + 0.5\cdot m + k=61m + k$. To obtain an upper bound on the cost of the
quasi-tour, we observe that the tour has at most $\nu+1$ components (one for
each bi-wheel and one containing $s$). The lemma follows. 

\end{proof}
 

\subsection{Assignment from Tour}
In this section, we are going prove the other direction
of our reduction. Given a tour  in $G_S$, we are going to 
define an  assignment to the variables of the associated instance 
of the Hybrid problem and give the proof of the following lemma.  

\begin{lemma}
\label{lem:tsp2}

If there is a tour in $G_S$ with  cost $ 61m +k-2 $, then, there is an
assignment to the variables of the corresponding instance of the Hybrid problem
that leaves at most $k$ equations unsatisfied.

\end{lemma}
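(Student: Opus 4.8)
The plan is to reverse the construction in Lemma~\ref{lem:tsp1}: given a tour $E_T$ in $G_S$ of cost $61m+k-2$, I would first normalize it into a structurally well-behaved quasi-tour without increasing cost, then read off an assignment and bound the number of violated equations by a local-cost accounting argument dual to the one used in Lemma~\ref{lem:tsp1}. The first step is to observe that since all forced edges must be traversed at least once, and since a tour has $con_T=1$ so the $+2(con_T-1)$ term vanishes, we have $\sum_{e\in E_T}w(e)=61m+k-2$. Using the local edge cost function $c_T$, I would partition $V(G_S)$ into the same covering used before: the $9m$ two-element sets for matching-edge gadgets, the $m$ eleven-element sets for size-three gadgets, and $\{s\}$. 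Since $c_T(V)=\sum_{e\in E_T}w(e)$ and $c_T$ is subadditive with equality on disjoint sets (this covering is a partition), the total cost equals the sum of the local costs, so I get a budget to distribute.

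Next I would establish lower bounds on the local cost of each gadget type as a function of how ``consistently'' the tour uses it. For a matching-edge gadget on $\{x^u_i,x^n_j\}$, the two forced double-weight edges already contribute $4$ to the local cost; I would argue that any balanced, connected-compatible way of completing it pays at least $5$, and pays strictly more (at least $6$) unless the tour enters/exits in one of the two ``intended'' configurations, which forces $x^u_i\ne x^n_j$ locally and, crucially, forces the tour to commit to using all cycle edges of exactly one of the two cycles through that vertex and none of the other. Similarly, for a size-three gadget the forced edges contribute $9.5$, any valid completion costs at least $15.5$, and a completion that is not consistent with some assignment to $(x,y,z)$ satisfying $x\oplus y\oplus z=0$ costs at least $16.5$. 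The assignment to the Hybrid variables is then defined: for each bi-wheel, the cycle whose simple edges are (predominantly) used gets value $1$, the other value $0$; contact variables inherit this, and the consistency of the matching gadgets propagates values coherently along the cycles via the cycle equations (which cost nothing, exactly as in the forward direction). An equation is charged as ``unsatisfied'' precisely when the corresponding gadget is used in a non-intended / non-satisfying way, each such event costing at least one extra unit of local cost beyond the $61m$ baseline; summing, the number of such events is at most $(61m+k-2)-61m+2=k$.

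The main obstacle I expect is the rigidity argument for the matching-edge gadgets — precisely, showing that cheap use of a matching gadget forces the tour to behave like a ``wheel selector,'' using all simple cycle edges of one side and none of the other, so that the cycle equations really do come ``for free'' and the induced assignment is well-defined (constant on each cycle). This requires a careful case analysis of the degree/balance constraints at $x^u_i$ and $x^n_j$ in Figure~\ref{fig:eqn2}(a): because these vertices have low degree and two of their incident edges are forced and doubled, the parity and balance conditions leave only a handful of local patterns, and I would enumerate them, attach the correct extra cost to each ``bad'' pattern, and check that a globally consistent choice over one bi-wheel is forced (using Theorem~\ref{thm:biwheel}, the amplifier property, to rule out a mixed assignment within a single wheel being cheap — any inconsistent cut of size $\ell$ in the amplifier costs roughly $\ell$ extra, and the amplifier guarantees $\ell$ is at least the number of mis-set contacts, so consistency is never more expensive than inconsistency). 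A secondary but routine subtlety is handling the $L$-edge subdivisions used to implement forced edges: a cheap tour might traverse a forced path an even number of times $\ge 2$ rather than exactly once, or skip one subdivision edge; I would note (as in \cite{L12}) that up to a negligible additive term one may assume each forced edge is used exactly once, and that doubling any single unforced edge to reconnect components is already accounted for in the $+2(con_T-1)$ convention.
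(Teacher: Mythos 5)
Your high-level plan is close to the paper's: partition $V(G_S)$ into the singleton $\{s\}$, the $9m$ two-element matching-gadget sets, and the $m$ eleven-element size-three gadget sets, establish local lower bounds ($5$ and $15.5$ respectively), and charge any excess ``credit'' to unsatisfied equations. The honest/dishonest dichotomy for forced-edge usage is also exactly the case split the paper performs. The one genuinely different (and genuinely problematic) step is how you define the assignment and why the cycle equations come for free.

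You propose a \emph{global} definition — ``for each bi-wheel, the cycle whose simple edges are (predominantly) used gets value $1$'' — and then worry that you must prove the tour acts as a ``wheel selector'' so the assignment is well-defined and constant on each cycle, proposing to invoke Theorem~\ref{thm:biwheel} to rule out cheap mixed traversals. This misplaces where the amplifier property lives in the argument. The amplifier is consumed entirely inside Theorem~\ref{thm:hybrid}, where it guarantees an assignment to the Hybrid instance can be made consistent without increasing the number of unsatisfied equations; it has no role in Lemma~\ref{lem:tsp2}. There is also no need for the induced assignment to be constant along a cycle: the paper instead defines the assignment \emph{per vertex}. Each checker vertex has exactly two forced edges (to its matching partner) and exactly two simple edges (its two cycle neighbors); if it is honest (both forced edges used once), balance forces it to use either both or neither of its simple edges, and we set it to $1$ in the first case and $0$ in the second. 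With this local rule, two honest cycle-adjacent vertices automatically agree — if the shared cycle edge is used, both must use both their simple edges; if not, both use neither — so cycle equations between honest variables are satisfied with no global ``selector'' claim. If a vertex is dishonest, its extra local cost pays for the broken cycle equations. The upshot is that the amplifier never needs to be invoked here, and your proposed majority definition together with the ``inconsistent cut of size $\ell$ costs $\ell$ extra'' claim would require a separate and nontrivial argument that the paper avoids entirely.

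A second, smaller point: you treat the $L$-subdivision issue as a peripheral subtlety, but the operative dishonest case in the paper is a forced edge traversed \emph{twice} (not skipped), and this is where a substantial amount of the case analysis lives — especially for the size-three gadget, where a parity argument shows a single doubled forced edge is impossible and one must then enumerate how many of $x,y,z$ have a doubled forced edge, pairing each pattern with both a local cost lower bound and an assignment satisfying enough of the seven affected equations. That enumeration is the real work of this direction and your proposal does not yet engage with it.

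Also, a minor accounting slip: you write that the target bound comes out to $(61m+k-2)-61m+2=k$ because ``the $+2(con_T-1)$ term vanishes'' for a tour. The paper instead proves the statement for any quasi-tour via the full local cost $c^F_T(V')=c_T(V')+2\,con_T(V')$, which charges each fully-contained component, and observes $c^F_T(V(G))$ equals the quasi-tour cost plus $2$. That is the cleaner bookkeeping and what makes the local credit inequalities compose.

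In summary: your decomposition and budget plan match the paper, but the assignment rule should be the local per-vertex one (which makes the cycle-equations-for-free claim immediate, without any wheel-selector rigidity or amplifier invocation), and the honest-vs-dishonest case analysis for the size-three gadget is the missing core, not a side issue.
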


Again, let us give a useful definition. Consider a quasi-tour $E_T$ and a set
$V'\subseteq V(G)$. Let $con_T(V')$ be the number of connected components induced
by $E_T$ which are fully contained in $V'$. Then, the full local cost of the
set $V'$ is defined as $ c^F_T(V')= c_T(V')+ 2con_T(V')$. By the definition,
the full local cost of $V(G)$ is equal to the cost of the quasi-tour (plus 2).

Intuitively, $c^F_T(V')$ captures the cost of the quasi-tour restricted to
$V'$: it includes the cost of edges and the cost of added connected components.
Note that now for two disjoint sets $V_1,V_2$ we have $c^F_T(V_1\cup V_2)\ge
c^F_T(V_1)+c^F_T(V_2)$ since $V_1\cup V_2$ could contain more connected
components than $V_1,V_2$ together. If we know that the total cost of the
quasi-tour is small, then $c^F_T(V)$ is small (less than $61m+k$). We can use
this to infer that the sum of the local full costs of all gadgets is small.

The high-level idea of the proof is the following: we will use roughly the same
partition of $V(G)$ into sets as in the proof of Lemma \ref{lem:tsp1}. For each
set, we will give a lower bound on its full local cost for any quasi-tour, which
will be equal to what the tour we constructed in Lemma \ref{lem:tsp1} pays. If
a given quasi-tour behaves differently its local cost will be higher. The
difference between the actual local cost and the lower bound is called the
credit of that part of the graph. We construct an assignment for $I_2$ and show
that the total sum of credits is higher that the number of unsatisfied
equations. But using the reasoning of the previous paragraph, the total sum of
credits will be at most $k$.

\begin{proof} We are going to prove a slightly stronger statement and show that
if there exists a quasi-tour in $G_S$ with cost $ 61m +k-2 $, then, there
exists an assignment leaving at most   $k$ equations  unsatisfied.  Recall that
the existence of a tour in $G_S$ with  cost $C$ implies the existence  of  a
quasi-tour  in $G_S$ with cost at most $C$.

We may assume that simple edges are  contained only once in $E_T$ due to 
the following preprocessing step:
If $E_T$ contains two copies of the same simple edge, we remove them
 without increasing the cost, since the number of components can only increase by one.

In the following, 
given a quasi-tour $E_T$ in $G_S$, we are going to define an 
assignment $\phi_T$   and analyze the number of satisfied equations by $\phi_T$
compared to the cost of the quasi-tour.  

The general idea is that each vertex of $G_S$ that corresponds to a variable of
$I_2$ has exactly two forced and exactly two simple edges incident to it. If
the forced edges are used once each, the variable is called honest. We set it
to 1 if the simple edges are both used once and to 0 otherwise. It is not hard
to see that, because simple cycle edges connect vertices that represent the
variables, this procedure will satisfy all cycle equations involving honest
variables. We then argue that if other equations are unsatisfied the tour is
also paying extra, and the same happens if a variable is dishonest.

Let us give more details. First, we concentrate on the assignment for checker
variables.  

\subsubsection{Assignment for Checker Variables} 

Let us consider the following equations with two variables $x^u_{i-1}\oplus
x^u_{i}=0$, $x^u_{i}\oplus x^u_{i+1}=0$, $x^n_{j-1}\oplus x^n_{j}=0$,
$x^n_{j}\oplus x^n_{j+1}=0$ and $x^u_{i}\oplus x^n_{j}=1$. We are going to
analyze the cost of a quasi-tour traversing the gadget displayed in
Figure~\ref{fig:eqn2}~$(a)$ and define an assignment according to $E_T$. Let us
first assume that our quasi-tour is honest, that is, the underlying  quasi-tour
traverses forced edges only once.

\noindent
\textbf{Honest tours:} 
For  $x\in \{x^u_i,x^n_j\}$, we set $x=1$
 if the quasi-tour traverses both
simple edges incident on $x$ and $x=0$, otherwise. 
Since we removed all copies of the same simple 
 edge, we may assume that cycle equations 
 are always satisfied. 
If  the tour uses 
$x^u_{i-1}-x^u_i -_F x^n_j -_F x^u_i - x^u_{i+1}$, we get
$x^u_{i-1}=x^u_{i+1}=1$, $x^n_{j-1}=x^n_{j+1}=0$ and  $5$ satisfied equations.
Given $x^n_{j-1}-x^n_j -_F  x^u_i-_F x^n_j - x^n_{j+1}$,
we obtain  $5$ satisfied equations as well. Let us define $V^p_i:=\{x^u_i,x^n_j\}$.  
Notice that in both cases, we have
local cost $c^F_T(V^p_i)= 5$. We claim that $c^F_T(V^p_i) \geq 5$ for a valid
quasi-tour. In order to obtain a valid quasi-tour, we need to traverse both
forced edges in $G^p_i$ and use at least two simple edges, as otherwise, it
implies $c^F_T(G^p_i) \geq 6$.  Given a quasi-tour $E_T$, we introduce a local
credit function  defined by $cr_T(V^p_i)= c^F_T(V^p_i)-5$.  If $x^u_i -_F x^n_j
-_F x^u_i $ forms a connected component, we get $4$ satisfied equations and
$cr_T(V^p_i)= 1$, which is sufficient to pay for the unsatisfied equation
$x^u_i\oplus x^n_j=1$.  On the other hand, assuming $x^u_{i-1}=x^u_{i+1}=1$ and
$x^n_{j-1}=x^n_{j+1}=1$, we get $cr_T(V^p_i)=1$ and $1$ unsatisfied equation.\\
\textbf{Dishonest tours:} We are going to analyze quasi-tours,  which are using
one of the forced edges  twice.  By setting $x^u_i\neq x^n_j$, we are able to
find an assignment that always satisfies $x^u_i\oplus x^n_j=1$ and two other
equations out of the five that involve these dishonest variables. The local
cost in this case is at least $7$.  Hence, the credit $cr_T(V^p_i)=2$ is
sufficient to pay for the two unsatisfied equations.

\subsubsection{Assignment for Contact Variables}

Again, we will distinguish between honest tours (which use forced edges exactly
once) and dishonest tours. This time we are interested in seven equations: the
size-three equation $x\oplus y\oplus z=0$ and the six cycle equations
containing the three contacts.

Observe that the local cost of 
$V^{3S}_j:=\{x^r, x^l,x, y^r, y^l,y, z^r, z^l, z,
e_j^r, e_j^l  \}$ is at least 15.5. The local edge cost of any quasi-tour
is 9.5 for the forced edges. For each component $\{\gamma,\gamma^l,\gamma^r\}$
with $\gamma\in\{x,y,z\}$, we need to pay at least 2 more because there are two
vertices with odd degree ($\gamma^l, \gamma^r$) and we also need to connect the
component to the rest of the graph (otherwise the component already costs 2
more). Let us define the credit of $V^{3S}_j$ with respect to $E_T$ 
by $ cr_T=c^F_T(V^{3S}_j)-15.5$.\\
\\
\noindent \textbf{Honest tours:} For each  $\gamma\in \{x,y,z\}$, we set
$\gamma=1$ if the tour uses both simple edges incident on $\gamma$ and $0$,
otherwise.  Notice that in the case  $(x+y+z=b )$ with $b\in
\{0,2\}$, this satisfies all seven equations and the tour has local cost at
least $c^F_T(V^{3S}_j)=15.5$.

Case $(x=y=z=1):$ The assignment now failed to satisfy the size-three equation,
so we need to prove that the quasi-tour has local cost at least 16.5. Since all
vertices are balanced with respect to $E_T$, the quasi-tour has to use at least
one edge incident on $e^r_j$ and $e^l_j$ besides $\{s,e^r_j\}_F$ and
$\{s,e^l_j\}_F$.  If the quasi-tour takes $\{e^\al_j,\gamma^\al\}$ for a $\gamma
\in \{x,y,z\}$ and all $\al \in \{r,l\} $, since all simple edges incident on
$x,y,z$ are used, we get at total cost of at least 16.5, which gives a credit of
1.

Case $(x+y+z=1):$  Without loss of generality, we assume that $x=y=0\neq z$
holds. Again, only the size-three equation is unsatisfied, so we must show that
the local cost is at least $16.5$. We will discuss two subcases.  $(i)$ There is a
connected component $\delta -_F \delta^r - \delta^l -_F \delta $ for some
$\delta \in \{x,y\}$.  We obtain that $c^F_T(\{\delta,\delta^l,\delta^r\}) \geq
6$ and therefore, a lower bound on the total cost of $16.5$.
  $(ii)$ Since we may assume that $x^r$, $x^l$, $y^r$ and $y^l$ are
balanced with respect to $E_T$, we have that $\{e^\al_j,\gamma^\al\}\in E_T$ for
all $\al\in \{r,l\}$ and $\gamma\in \{x,y\}$.  Because $e^\al_j$ are also
balanced, we obtain $\{e^\al_j,z^\al\}\in E_T$ for all $\al \in \{r,l\}$, which
implies a total cost of $16.5$.\\

\noindent \textbf{Dishonest tours:} Let us assume that the quasi-tour uses both
of the forced edges $\{\gamma^r, \gamma\}$ and $\{\gamma^l, \gamma\}$ for some
$\gamma \in \{x,z,y\}$ twice. We delete both copies and add $\{\gamma^r,
\gamma^l\}$ instead which reduces the cost of the quasi-tour.  Hence, we may
assume that only one of the two incident forced edges is used twice. 
 
First, observe that if all forced edges were used once, then there would be
eight vertices in the gadget with odd degree: $x^r, x^l, y^r, y^l, z^r, z^l,
e_j^r, e_j^l$. If exactly one forced edge is used twice, then seven of these
vertices have odd degree. Thus, it is impossible for the tour to make the
degrees of all seven even using only the simple edges that connect them. We can
therefore assume that if a forced edge is used twice, there exists another
forced edge used twice.

We will now take cases, depending on how many of the vertices $x,y,z$ are
incident on forced edges used twice. Note that if one of the forced edges
incident on $x$ is used twice, then exactly one of the simple edges incident on
$x$ is used once. So, first suppose all three of $x,y,z$ have forced edges used
twice. The local cost from forced edges is at least $14$. Furthermore, there
are three vertices of the form $\gamma^\al$, for $\gamma\in\{x,y,z\}$ and
$\al\in\{l,r\}$ with odd degree. These have no simple edges connecting them,
thus the quasi-tour will use three simple edges to balance their degrees.
Finally, the used simple edges incident on $x,y,z$ each contribute $0.5$ to the
local cost. Thus, the total local cost is at least $18.5$, giving us a credit of
$3$. It is not hard to see that there is always an assignment satisfying four
out of the seven affected equations, so this case is done.

Second, suppose exactly two of $x,y,z$ have incident forced edges used twice,
say, $x,y$. For $z$, we select the honest assignment (1 if the incident simple
edges are used, 0 otherwise) and this satisfies the cycle equations for this
variable. We can select assignments for $x,y$ that satisfy three of the
remaining five equations, so we need to show that the cost in this case is at
least $17.5$. The cost of forced edges is at least $12.5$, and the cost of simple
edges incident on $x,y$ adds $1$ to the local cost. One of the vertices
$x^l,x^r$ and one of $y^l,y^r$ have odd degree, therefore the cost uses two
simple edges to balance them. Finally, the vertices $z^l,z^r$ have odd degree.
If two simple edges incident to them are used, we have a total local cost of 17.5.
If the edge connecting them is used, then the two simple edges incident on $z$
must be used, again pushing the local cost to 17.5.

Finally, suppose only $x$ has an incident forced edge used twice. By the parity
argument given above, this means that one of the forced edges incident on $s$ is
used twice. We can satisfy the cycle equations for $y,z$ by giving them their
honest assignment, and out of the three remaining equations some assignment to
$x$ satisfies two. Therefore, we need to show that the cost is at least 16.5. The
local cost from forced edges is $11.25$ and the simple edge incident on $x$
contributes $0.5$. Also, at least one simple edge incident on $x^l$ or $x^r$ is
used, since one of them has odd degree. For $y^l,y^r$, either two simple edges
are used, or if the edge connecting them is used the simple edges incident on
$y$ contribute $1$ more. With similar reasoning for $z^l,z^r$, we get that the
total local cost is at least $16.75$.\\

\noindent
Let us now conclude our analysis. Consider the following partition of $V$: we
have a singleton set $\{s\}$, $9m$ sets of size $2$ containing the matching
edge gadgets and $m$ sets of size $11$ containing the gadgets for size-three
equations (except $s$). The sum of their local costs is at most $c^F_T(V)\le
61m +k$. But the sum of their local costs is (using the preceding analysis)
equal to $61m+\sum cr_T(V_i)$. Thus, the sum of all credits is at most $k$. Since
we have already argued that the sum of all credits is enough to cover all
equations unsatisfied by our assignment, this concludes the proof.

 \end{proof}

%

We are ready to give the proof of  Theorem~\ref{thm:tspmain}.      
\begin{proof}[Proof of Theorem~\ref{thm:tspmain}.]

We are given an instance $I_1$ of the \maxlin~with $\nu$ variables and $m$
equations. For all $\delta>0$, there exists a $k$ such that if we repeat each
equation $k$ time we get an instance $I^{(k)}_1$ with $m'=km$ equations and
$\nu$ variables such that $2(\nu+1)/m'\le \delta$.

Then, from $I^{(k)}_1$, we generate an instance $I_2$ of the Hybrid problem and
the corresponding graph $G_S$. Due to Lemmata~\ref{lem:tsp1},~\ref{lem:tsp2}
and Theorem~\ref{thm:hybrid}, we know that for all $\eps>0$, it is \np-hard to
tell whether there is a tour with cost at most $61m'+ 2\nu +2 +\eps \cdot m'
\leq 61\cdot m' +(\delta+\eps)m'$ or all tours have cost at least $61m' + (0.5
-\eps) m' - 2 \geq 61.5 \cdot m' -\eps\cdot m' -\delta \cdot m' $. The ratio between these
two cases can get arbitrarily close to $123/122$ by appropriate choices for
$\epsilon, \delta$.

  \end{proof}

\section{ATSP} In this section, we  prove the following theorem.

\begin{theorem} \label{thm:atspmain} 

It is \np-hard to approximate the ATSP to within any constant approximation
ratio less than $75/74$. 

\end{theorem}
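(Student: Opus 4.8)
The plan is to mirror the symmetric case exactly: build a graph $G_A$ from an instance $I_2$ of the Hybrid problem, using the directed gadgets of Figure~\ref{fig:eqn3}~$(b)$ for size-three equations and Figure~\ref{fig:eqn2}~$(b)$ for the matching-edge equations, with cycle edges again encoded ``for free'' by the consistency of the matching gadgets. As in Theorems~\ref{thm:tspmain}, the proof splits into two lemmata: a ``tour from assignment'' direction showing that an assignment leaving $k$ equations of $I_2$ unsatisfied yields a tour in $G_A$ of cost at most $C_0\,m + 2\nu + k + O(1)$ for the appropriate constant $C_0$, and an ``assignment from tour'' direction showing that a (quasi-)tour of cost $C_0\,m + k - O(1)$ yields an assignment leaving at most $k$ equations unsatisfied. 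Combining these with Theorem~\ref{thm:hybrid} and the amplification trick of repeating each equation of $I_1$ many times (to make the additive $2(\nu+1)$ negligible), the gap between the ``yes'' cost $\approx C_0 m'$ and the ``no'' cost $\approx (C_0+0.5)m'$ gives an inapproximability ratio arbitrarily close to $(2C_0+1)/(2C_0)$; one needs $C_0 = 37$ so that $(2C_0+1)/(2C_0) = 75/74$.

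First I would set up the directed construction precisely. The forced-edge trick carries over verbatim (subdivide a forced directed edge into a long directed path), so again we may speak of forced edges that every tour traverses at least once, and we work with quasi-tours: multisets of directed edges balancing every vertex, paying $2(\mathit{con}_T-1)$ for the extra components. The key structural point — the same one that powers the symmetric proof — is that in the directed matching gadget of Figure~\ref{fig:eqn2}~$(b)$, a consistent traversal forces the tour to use all cycle arcs of exactly one of the two cycles of each bi-wheel and none of the other, so the $21m$ cycle equations cost nothing extra. For the ``tour from assignment'' lemma I would, for each bi-wheel, orient and use the cycle corresponding to the value-$1$ side, route through each matching gadget and each size-three gadget according to the local assignment (taking cases on $x+y+z \in \{0,1,2,3\}$ exactly as in Lemma~\ref{lem:tsp1}), use every forced arc once, and then bound the component count by $\nu+1$; summing the local costs of the size-two sets, the size-$c$ size-three-gadget sets, and $\{s\}$ gives the claimed $37m + 2\nu + k + O(1)$ (with the precise per-gadget constants read off the asymmetric gadgets).

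For the converse I would reuse the local-credit machinery verbatim: define $c_T, c^F_T$ for vertex subsets exactly as before (note $c^F_T$ is superadditive over disjoint sets, so small total cost forces small total credit), partition $V(G_A)$ into $\{s\}$, the $9m$ size-two matching-gadget sets, and the $m$ size-eleven size-three-gadget sets, and for each set establish a lower bound on $c^F_T$ equal to what the honest tour of the first lemma pays, the slack being the credit. Then define an assignment $\phi_T$ (honest variables: forced arcs used once, set to $1$ iff the simple arcs are used; dishonest variables handled by forcing the inequality), and show case by case — honest tours, dishonest tours with one/two/three of $x,y,z$ on doubled forced arcs, and the parity argument that a doubled forced arc forces another — that the credit of each gadget covers its unsatisfied equations. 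I expect the main obstacle to be precisely this case analysis for the asymmetric size-three gadget of Figure~\ref{fig:eqn3}~$(b)$: unlike the undirected case, balance is indegree $=$ outdegree rather than even degree, so the counting of which arcs must be present is more delicate, and one must carefully verify that each configuration's local cost lower bound exceeds $15.5 + (\text{number of violated equations among the seven})$ — and, crucially, that the resulting constant works out to exactly $37$ per equation rather than $61$ as in the symmetric case, which is what ultimately yields $75/74$ instead of $123/122$.
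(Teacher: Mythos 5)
Your plan is essentially the paper's: encode the Hybrid instance into a directed graph with bidirected forced edges, route the tour along exactly one of the two cycles of each bi-wheel so the $21m$ cycle equations come free, gadgetize the $9m$ matching and $m$ size-three equations, prove the two lemmata with local-credit accounting, and amplify; your arithmetic $C_0=37$, ratio $(2C_0+1)/(2C_0)=75/74$ is correct. One carryover slip worth fixing before you run the case analysis: you quote $15.5$ as the base local cost of the size-three gadget, which is the symmetric figure and is actually inconsistent with $C_0=37$ (it would give $27+15.5=42.5$, not $37$) --- the asymmetric gadget of Figure~\ref{fig:eqn3}~(b) has base local cost $10+\lambda$ (two forced $\lambda$-arcs plus ten vertices each of whose outgoing arcs cost $1$), and because the directed $c_T$ in the paper counts only outgoing arcs it is exactly additive over disjoint sets, so the converse lemma works directly with tours and plain $c_T$ rather than $c^F_T$.
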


\subsection{Construction}

Let us  describe the construction that encodes an instance $I_2$ of the Hybrid
problem into an instance of the ATSP. Again, it will be useful to have the
ability to force some edges to be used, that is, we would like to have bidirected
forced edges. A bidirected forced edge of weight $w$ between two vertices $x$
and $y$ will be created in a similar way as undirected forced edges in the
previous section: construct $L-1$ new vertices and connect $x$ to $y$ through
these new vertices, making a bidirected path with all edges having weight
$w/L$. It is not hard to see that without loss of generality we may assume that
all edges of the path are used in at least one direction, though we should note
that the direction is not prescribed. In the remainder, we denote a directed
forced edge consisting of vertices $x$ and $y$ by $(x,y)_F$, or $x\rightarrow_F
y$. 

Let $I_2$ consist of the collection $\{W_i\}^\nu_{i=1}$ of bi-wheels. Recall
that the bi-wheel consists of two cycles and a perfect matching between their
checkers. Let $\{x^u_i, x^n_i\}^z_{i=1}$ be the associated set of variables of
$W_p$. We write $u(i)$ to denote the function which, given the index of a
checker variable $x^u_i$ returns the index $j$ of the checker variable $x^n_j$
to which it is matched (that is, the function $u$ is a permutation function
encoding the matching). We write $n(i)$ to denote the inverse function
$u^{-1}(i)$.

Now, for each bi-wheel $W_p$, we are going to construct the corresponding
directed graph $G^p_A$ as follows. First, construct a vertex for each checker
variable of the wheel. For each matching equation $x^u_i\oplus x^n_j =1$, we
create a bidirected forced edge $\{x^u_i, x^n_j\}_F$ with $w(\{x^u_i,
x^n_j\}_F)=2$.

For each contact variable $x_k$, we create two corresponding vertices $x^r_k$
and $x^l_k$, which are joined by the bidirected forced edge $\{x^r_k,
x^l_k\}_F$ with $w(\{x^r_k, x^l_k\}_F)=1$.  

Next, we will construct two directed cycles $C^p_u$ and $C^p_n$. Note that we
are doing arithmetic on the cycle indices here, so the index $z+1$ should be
read as equal to $1$. For $C^p_u$, for any two consecutive checker vertices
$x^u_i, x^u_{i+1}$ on the un-negated side of the bi-wheel, we add a simple
directed edge $x^n_{u(i)}\to x^u_{i+1}$.  If the checker $x^u_i$ is followed by
a contact $x^u_{i+1}$ in the cycle, then we add two simple directed edges
$x^n_{u(i)}\to x^{ur}_{i+1}$ and $x^{ul}_{i+1}\to x^u_{i+2}$.  Observe that by
traversing the simple edges we have just added, the forced matching edges in
the direction $x^u_i\to_F x^n_{u(i)}$ and the forced contact edges for the
un-negated part in the direction $x^{ur}_i\to_F x^{ul}_i$ we obtain a cycle that
covers all checkers and all the contacts of the un-negated part.

We now add simple edges to create a second cycle $C^p_n$. This cycle will
require using the forced matching edges in the opposite direction and, thus,
truth assignments will be encoded by the direction of traversal of these edges.
First, for any two consecutive checker vertices $x^n_i, x^n_{i+1}$ on the
un-negated side of the bi-wheel, we add the simple directed edge $x^u_{n(i)}\to
x^n_{i+1}$.  Then, if the checker $x^n_i$ is followed by a contact $x^n_{i+1}$
in the cycle then we add the simple directed edges $x^u_{n(i)}\to x^{nr}_{i+1}$
and $x^{nl}_{i+1}\to x^n_{i+2}$. Now by traversing the edges we have just
added, the forced matching edges in the direction $x^n_i\to_F x^u_{n(i)}$ and the
forced contact edges for the negated part in the direction $x^{nr}_i\to_F
x^{nl}_i$, we obtain a cycle that covers all checkers and all the contacts of
the negated part, that is, a cycle of direction opposite to $C^p_u$.

What is left is to encode the equations of size three.  Again, we have a
central vertex $s$ that is connected to gadgets simulating equations with three
variables. For every equation with three variables, we create the  gadget
displayed in Figure~\ref{fig:eqn3} $(b)$, which is  a variant of the gadget
used by Papadimitriou and Vempala~\cite{PV06}.  Let us assume that the $j$-th
equation with three variables in $I_3$ is of the form $x \oplus y \oplus z =
1$.  This equation is simulated by $G^{3A}_j$. The vertices used are the
contact vertices $\gamma^\al, \gamma\in \{x,y,z\},\al\in \{r,l\}$, which we
have already introduced, as well as the vertices  $\{s_j, t_j, e^i_j \mid i\in
[3]  \}$. For notational simplicity, we define 
$V^{3A}_j=\big\{s_j, t_j, e^i_j, \gamma^\al \mid i\in [3], \gamma\in \{x,y,z\},
\al\in \{r,l\} \big\}$.
  All directed non-forced edges are simple.  The vertices $s_j$ and
$t_j$ are connected to $s$ by forced edges with
$w((s,s_j)_F)=w((t_j,s)_F)=\lambda$, where $\lambda>0$ is a  small fixed
constant. To simplify things, we also force them to be used in the displayed
direction by deleting the edges that make up the path of the opposite
direction.  This is the whole description of the graph $G_A$.

 \subsection{Assignment to Tour}
We are going to construct a tour in $G_A$ given an assignment to the variables of 
$I_2$ and prove the following lemma. 
\begin{lemma}
\label{lem:atsp1}

 Given an instance $I_2$ of the Hybrid problem with $\nu$ bi-wheels and an 
assignment that leaves $k$ equations in $I_2$ unsatisfied,  then, there  exists
a tour in $G_A$ with cost at most $37m+5\nu +2m \lambda + 2\nu \lambda+k$.

 \end{lemma}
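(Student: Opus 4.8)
The plan is to mirror the structure of the proof of Lemma~\ref{lem:tsp1} for the symmetric case, but adapted to the directed gadgets of Figure~\ref{fig:eqn3}~$(b)$ and Figure~\ref{fig:eqn2}~$(b)$. First I would invoke Theorem~\ref{thm:hybrid} to normalize: we may assume every size-two equation (matching and cycle) is satisfied by the given assignment, and all size-three equations have the form $x\oplus y\oplus z=1$. By the standard consistency argument for bi-wheels, for each bi-wheel $W_p$ the assignment is constant on each side: either $x^u_i=1,x^n_i=0$ for all $i$, or the reverse. As in the symmetric case, it suffices to build a \emph{quasi-tour} of the claimed cost using each forced edge exactly once in some direction; disconnected components can then be merged by doubling a simple edge of weight $1$, which does not increase the cost $\sum_{e\in E_T}w(e)+2(con_T-1)$. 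The forced-edge contribution is fixed: $2$ per matching edge ($2\cdot 9m$), $1$ per contact edge ($1\cdot 6m$, since there are $6m$ contacts after the tripling in Theorem~\ref{thm:hybrid}\,---\,actually $m$ size-three equations contribute $3$ contacts times $2$ vertices, so $6m$ contact edges... I would recount here), plus $2\lambda m$ for the $2m$ forced $s$-edges and $2\nu\lambda$ for... one would pin down these constants against the target $37m+5\nu+2m\lambda+2\nu\lambda+k$.

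Next I would describe the routing per bi-wheel. Using the construction of $C^p_u$ and $C^p_n$ in Section~5.1: if the assignment gives the un-negated side value $1$, traverse cycle $C^p_u$, i.e.\ use all the simple directed edges of that cycle together with the matching forced edges in the direction $x^u_i\to_F x^n_{u(i)}$ and the contact forced edges $x^{ur}_i\to_F x^{ul}_i$; if it gives the negated side value $1$, traverse $C^p_n$ instead, which uses the matching forced edges in the opposite direction $x^n_i\to_F x^u_{n(i)}$. In either case this is a single directed closed walk covering all checkers of $W_p$ and all contacts of the active side; the contacts of the inactive side are covered only through their forced contact edge, so they sit on a tiny cycle (or get absorbed into a size-three gadget through the simple edges $\{e^\al_j,\gamma^\al\}$). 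This is exactly where the ``cycle equations for free'' phenomenon reappears in the directed setting: one never builds a gadget for a cycle equation.

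Then I would handle the size-three gadget $G^{3A}_j$ of Figure~\ref{fig:eqn3}~$(b)$, which is a variant of the Papadimitriou--Vempala gadget. For each of the four cases of $(x,y,z)$ consistent with $x\oplus y\oplus z=1$ being satisfied ($x+y+z\in\{1,3\}$ meaning satisfied, $x+y+z\in\{0,2\}$ meaning unsatisfied\,---\,or however the parity works out for the ``$=1$'' form), I would specify which simple edges of the gadget, through $e^1_j,e^2_j,e^3_j,s_j,t_j$, are used so that all vertices become balanced, the forced $s$-edges $(s,s_j)_F$ and $(t_j,s)_F$ are each used once, and the gadget connects into the $s$-component and into the active cycle via the edges incident on $\gamma^r,\gamma^l$. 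One shows the local simple-edge cost per gadget is some fixed value when the equation is satisfied and $1$ more when it is not; summing over the $m$ gadgets gives the extra $+k$ term. Finally, bounding $con_T$: the quasi-tour has at most one component per bi-wheel plus one for the $s$-part, i.e.\ at most $\nu+1$ components, contributing $2\nu$ to the cost, and I would reconcile the $5\nu$ and $\lambda$ terms against this count and the forced-edge budget. The main obstacle, and the part deserving genuine care rather than a sketch, is the exact case analysis for $G^{3A}_j$: one must exhibit, for each satisfying assignment, an Eulerian-style balanced selection of simple edges inside the directed gadget that closes up correctly with the forced edges and the two cycle directions, and verify that the per-gadget simple-edge cost hits precisely the number that makes the global total equal $37m+5\nu+2m\lambda+2\nu\lambda+k$; getting the constants to match (and correctly accounting the contribution of the contact-edge forced edges, which are shared between the cycle routing and the gadget) is the delicate bookkeeping step.
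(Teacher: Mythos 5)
Your high-level plan matches the paper's: normalize via Theorem~\ref{thm:hybrid}, traverse exactly one of $C^p_u$ or $C^p_n$ per bi-wheel according to the (consistent) assignment, and do a per-case analysis of the directed size-three gadget so that a satisfied equation costs one fixed amount and an unsatisfied one costs one more, giving the $+k$. Where you go wrong is the step ``disconnected components can then be merged by doubling a simple edge of weight $1$, which does not increase the cost $\sum_{e\in E_T}w(e)+2(con_T-1)$.'' That move is sound for the undirected TSP (doubling an undirected edge keeps every vertex balanced), but in the directed setting doubling an edge $(u,v)$ decreases $bal(u)$ by $2$ and increases $bal(v)$ by $2$, destroying the balance condition required of a quasi-tour. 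There is no ``add two copies of one edge'' repair in $G_A$, so the quasi-tour cost formula $\sum w(e)+2(con_T-1)$ is not an attainable tour cost here, and in particular cannot explain the $5\nu+2\nu\lambda$ term in the target bound.

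The paper resolves this by explicitly connecting each of the $\nu$ bi-wheel components to the $s$-component with a short \emph{directed cycle} through a size-three gadget, namely $s\to_F s_j\to e^1_j\to x^l\to_F x^r\to e^2_j\to t_j\to_F s$, whose edge cost is $5+2\lambda$. This preserves balance (a cycle adds $+1$ to both indegree and outdegree of each vertex on it) and reduces the component count by one, and doing it $\nu$ times yields exactly the $5\nu+2\nu\lambda$ term you were trying to ``reconcile.'' So the $5\nu$ is not a $2(con_T-1)$-type penalty at all; it is the real cost of the connecting cycles. Once you replace your merge step with this directed-cycle merge, the rest of your bookkeeping plan (edge cost $3\cdot 9m$ for matching gadgets, $(10+2\lambda)m+k$ for the size-three gadgets plus $s$'s outgoing $\lambda$-edges) lines up with the paper and gives $37m+2m\lambda+k$ before merging, hence $37m+5\nu+2m\lambda+2\nu\lambda+k$ after. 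A smaller slip: the ATSP gadget of Figure~\ref{fig:eqn3}~$(b)$ uses $s_j,t_j,e^1_j,e^2_j,e^3_j$, not the $e^r_j,e^l_j$ of the symmetric gadget, so the ``absorption'' you describe should be phrased in terms of those vertices.
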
  

Before we proceed, let us again give a definition for a local edge cost
function. Let $G$ be an edge-weighted digraph and $E_T$ a multi-set of
edges of $E(G)$ that defines a tour. Consider a set $V'\subseteq V(G)$. The local
edge cost of the set $V'$ is then defined as 

$$ c_T(V')=\sum_{u\in V'} \sum_{(u, v)\in E_T} w\big((u, v) \big) $$

In words, for each vertex in $V'$ we count the total weight of its outgoing
edges used in the quasi-tour (including multiplicities). Thus, that this sum
contains the full weight for edges with their source in $V'$, regardless of
where their other endpoint is.  Also note that again for two sets $V_1,V_2$ we
have $c_T(V_1\cup V_2)\le c_T(V_1)+c_T(V_2)$ (with equality for disjoint sets)
and that $c_T(V)=\sum_{e\in E_T}w(e)$.

\begin{proof}[Proof of Lemma~\ref{lem:atsp1}]

Let $W_p$ be a bi-wheel with variables $\{x^u_i,x^n_i\}^z_{i=1 }$.  Given an
assignment to the variables of $I_2$, due to Theorem~\ref{thm:hybrid}, we may
assume that either $x^u_i=1\neq x^n_j$ for all $i,j\in [z]$ or $x^u_i=0\neq
x^n_j$ for all $i,j\in [z]$.  We traverse the  cycle $C^p_u$ if $x^u_1=1$ and
the  cycle $C^p_n$ otherwise. This creates $\nu$ strongly connected components.
Each contains all the checkers of a bi-wheel and the contacts from one side.

For each matching edge gadget, the local edge cost is $3$. We pay two for the
forced edge and $1$ for the outgoing simple edge. We will account for the cost
of edges incident on contacts when we analyze the size-three equation gadget
below.

Let us describe the part of the tour traversing the graph $G^{3A}_j$, which
simulates $x\oplus y \oplus z= 1$. Recall that if $x$ is set to true in the
assignment we have traversed the bi-wheel gadgets in such a way that the forced
edge $x^r\to_F x^l$ is used, and the simple edge coming out of $x^l$ is used.

According to the assignment to $x$, $y$ and $z$, we traverse $G^{3A}_j$ as
follows: 

\textbf{ Case $(x+y+z=1)$:} Let us assume that $z=y=0\neq x$ holds.  Then, we
use $s\rightarrow_F s_j\rightarrow e^2_j \rightarrow y^l\rightarrow_F
y^r\rightarrow e^3_j \rightarrow z^l \rightarrow_F z^r \rightarrow e^1_j
\rightarrow t_j \rightarrow_F s $. The cost is $3+\lambda$ for the forced
edges, $6$ for the simple edges inside the gadget, plus $1$ for the simple edge
going out of $x^l$. Total local edge cost cost: $c_T(V^{3A}_j) =10+\lambda$.

\textbf{ Case $(x+y+z=3)$:} Then, we use $s\rightarrow_F s_j\rightarrow e^2_j
\rightarrow e^1_j \rightarrow e^3_j \rightarrow t_j \rightarrow_F  s $. Again
we pay $3+\lambda$ for the forced edges, $4$ for the simple edges inside the
gadget and $3$ for the outgoing edges incident on $x^l,y^l,z^l$. Total local
edge cost: $c_T(V^{3A}_j) = \lambda+10$. 

\textbf{ Case $(x+y+z=2)$: }Let us assume that $x=y=1\neq z$ holds.  Then, we
use $s\rightarrow_F s_j\rightarrow e^3_j \rightarrow z^l \rightarrow_F z^r
\rightarrow e^1_j \rightarrow e^3_j \rightarrow e^2_j \rightarrow t_j
\rightarrow_F s $ with total local edge cost  $c_T(V^{3A}_j) = \lambda+11$.

\textbf{ Case $(x+y+z=0)$: } We use $s\rightarrow_F s_j\rightarrow e^2_j
\rightarrow y^l\rightarrow_F y^r\rightarrow e^3_j \rightarrow z^l \rightarrow_F
z^r \rightarrow e^1_j \rightarrow x^l \rightarrow_F x^r \rightarrow e^2_j
\rightarrow t_j \rightarrow_F s $ with $c_T(V^{3A}_j) = \lambda+11$. 

The total edge cost of the quasi-tour we constructed is $3\cdot 9m +
(10+2\lambda)m + k= 37m + 2\lambda m + k$. We have at most $\nu +1$ strongly
connected components: one for each bi-wheel and one containing $s$. A component
representing a bi-wheel can be connected to $s$ as follows: let $x^l,x^r$ be
two contact vertices in the component. Add one copy of each edge from the cycle
$s\to_F s_j \to e^1_j \to x^l \to_F  x^r \to e^2_j \to t_j \to_F s$. This
increases the cost by $5+2\lambda$ but decreases the number of components by
one.

\end{proof}

\subsection{Tour to Assignment}
In this section, we are going to prove the other direction 
of the reduction.

 \begin{lemma}
\label{lem:atsp2}
 
If there is a tour with cost $37\cdot m + k+ 2\lambda \cdot m$, then, 
there is an assignment that 
leaves at most $k$ equations unsatisfied.

 \end{lemma}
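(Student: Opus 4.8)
The plan is to mirror the structure of the proof of Lemma~\ref{lem:tsp2}, but adapted to the directed setting. As in the symmetric case, I would first reduce to quasi-tours: the existence of a tour of cost $37m+k+2\lambda m$ gives a multiset $E_T$ of directed edges balancing every vertex and spanning $V(G_A)$, with $\sum_{e\in E_T}w(e)+2(con_T-1)\le 37m+k+2\lambda m$. A preprocessing step removes any doubled copy of a simple edge (in the directed setting one removes an oppositely-traversed pair, or notes that a repeated simple edge only helps the component count), so we may assume simple edges appear with bounded multiplicity. I would then define an appropriate \emph{full local edge cost} $c^F_T(V')=c_T(V')+2\,con_T(V')$ with the same superadditivity over disjoint sets, so that the sum over the partition lower-bounds the quasi-tour cost.

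Next I would set up the assignment. The key idea is that the direction in which a forced matching edge $\{x^u_i,x^n_j\}_F$ is traversed encodes the truth value of the corresponding checker variables: traversing $x^u_i\to_F x^n_{u(i)}$ throughout corresponds to selecting cycle $C^p_u$ (value $1$ on the un-negated side), and the reverse direction corresponds to $C^p_n$. A bi-wheel whose matching edges are all traversed consistently in one direction forces, by the balance conditions at the checker vertices together with the simple cycle edges, that exactly one of the two cycles $C^p_u,C^p_n$ is used in full and the other not at all — this is exactly where the amplifier property of Theorem~\ref{thm:biwheel} is invoked (via the consistency argument of Theorem~\ref{thm:hybrid}), so that cycle equations come "for free." I would call a matching edge \emph{honest} if used exactly once (in some direction) and a contact variable \emph{honest} if its forced edge $\{x^r_k,x^l_k\}_F$ is used once; set the value of $x$ according to the direction of traversal. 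For the size-three gadget $G^{3A}_j$, I would lower-bound $c^F_T(V^{3A}_j)$: the forced edges $(s,s_j)_F,(t_j,s)_F$ and the three contact forced edges contribute a fixed amount $\lambda+\text{(contact cost)}$; the balance conditions at $s_j,t_j,e^1_j,e^2_j,e^3_j$ force enough simple edges to be used that any honest tour pays at least the value achieved in Lemma~\ref{lem:atsp1} for a satisfying assignment, and strictly more (by at least $1$) when the assignment read off fails the size-three equation.

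Then I would define, for each of the $9m$ matching-edge sets $V^p_i$ and each of the $m$ gadget sets $V^{3A}_j$, a credit $cr_T(V^p_i)=c^F_T(V^p_i)-3$ and $cr_T(V^{3A}_j)=c^F_T(V^{3A}_j)-(10+\lambda\text{-terms})$, and argue case by case — honest versus dishonest traversals of each forced edge, with a parity argument showing that a doubled forced edge forces a second doubled forced edge, exactly as in Lemma~\ref{lem:tsp2} — that the credit of each part is at least the number of equations that part's chosen assignment leaves unsatisfied. Summing over the partition $\{s\}\cup\{V^p_i\}\cup\{V^{3A}_j\}$ and using superadditivity of $c^F_T$ together with the cost bound $c^F_T(V(G_A))\le 37m+k+2\lambda m$, the total credit is at most $k$, hence the assignment leaves at most $k$ equations unsatisfied. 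I expect the main obstacle to be the dishonest-tour case analysis for the size-three gadget: there are several subcases (one, two, or three of $x,y,z$ incident to a doubled forced edge, plus the subcase where a forced edge incident on $s$ is doubled), and in the directed setting one must track outgoing-edge costs and in/out-balance simultaneously rather than just parity of degree, so the bookkeeping to show the credit always covers the at most three unsatisfiable affected equations is where the real work lies.
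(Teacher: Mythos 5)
Your plan follows the paper's broad strategy (local credits, honest/dishonest classification, the direction of a forced edge encoding a truth value), but two of the mechanisms you import from the symmetric case are not what the paper uses here, and in both cases the ATSP proof is actually lighter-weight. First, the paper does \emph{not} pass to quasi-tours and does not use $c^F_T=c_T+2\,con_T$ in the asymmetric case: it works directly with the given tour and only with the directed local edge cost $c_T(V')$ (sum of outgoing edge weights from $V'$). Where your plan would invoke the $2\,con_T$ penalty, e.g.\ when the gadget could close into an internal cycle such as $e^2_j\to y^l\to_F y^r\to e^3_j\to z^l\to_F z^r\to e^1_j\to x^l\to_F x^r\to e^2_j$, the paper instead uses strong connectivity of the tour to conclude that some vertex of $V^{3A}_j\setminus\{s_j,t_j\}$ is visited twice, which already yields the extra unit of cost. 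Your variant would also work (the $+2$ from a new component dominates the $+1$ needed), but it is not the paper's route and it adds a constant offset you would have to absorb. Second, you anticipate that the dishonest case for $G^{3A}_j$ needs a parity argument (a doubled forced edge forcing a second doubled forced edge) and a multi-subcase count as in Lemma~\ref{lem:tsp2}. In the directed gadget this is not how it goes: the relevant dishonesty is traversal of a contact forced edge $\gamma^l\to_F\gamma^r$ in \emph{both} directions, and whenever that happens another edge must leave $\{\gamma^l,\gamma^r\}$, so that pair's local edge cost rises from $2$ to at least $3$, giving one credit per dishonest contact uniformly — no parity argument, no case split on how many of $x,y,z$ are dishonest. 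Finally, a small correction: the amplifier property of Theorem~\ref{thm:biwheel} is not re-invoked in this lemma; it was used once and for all to prove Theorem~\ref{thm:hybrid}. In Lemma~\ref{lem:atsp2} the cycle equations are satisfied ``for free'' for honest checkers purely because of the local in/out balance forced by the construction, not because of amplification.
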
   

\begin{proof}

Given a tour $E_T$ in $G_A$, we are going to define an assignment to checker
and contact variables. As in Lemma \ref{lem:tsp2}, we will show that any tour
must locally spend on each gadget at least the same amount as the tour we
constructed in Lemma \ref{lem:atsp1}. If the tour spends more, we use that
credit to satisfy possible unsatisfied equations.

\subsubsection{Assignment for Checker Variables} 

Let us consider the following equations with two variables $x^u_{i} \oplus
x^u_{i+1}=0$, $x^u_{i-1} \oplus x^u_{i}=0$, $ x^u_{i} \oplus x^n_{j}=1$,
$x^n_{j} \oplus x^n_{j+1}=0$, $x^n_{j-1} \oplus x^n_{j}=0$ and the
corresponding situation displayed in Figure~\ref{fig:eqn2} $(b)$.  Since $E_T$
is a valid tour in $G_A$, we know that $\{x^u_i,x^n_j\}_F$ is traversed and due
to the degree condition, for each $x\in \{x^u_i,x^n_j\}$, the tour uses another
incident edge $e$ on  $x$ with $w(e)\geq 1$.  Therefore, we have that
$c_T(\{x^u_i,x^n_j\})\geq 3$. The credit assigned to a gadget is defined as
$cr_T(\{x^u_i,x^n_j\})=c_T(\{x^u_i,x^n_j\})-3$. 

Let us define the assignment for $x^u_i$ and $x^n_j$. A variable $x^u_i$ is
honestly traversed if either both the simple edge going into $x^u_i$ is used
and the simple edge coming out of $x^n_j$ is used, or neither of these two
edges is used. In the first case, we set $x^u_i$ to $1$, otherwise to $0$.
Similarly, $x^n_j$ is honest if both the edge going into $x^n_j$ and the edge
out of $x^u_i$ are used, and we set it to $1$ in the first case and $0$
otherwise.

\textbf{Honest tours:} First, suppose that both $x^u_i$ and $x^n_j$ are honest.
We need to show that the credit is at least as high as the number of
unsatisfied equations out of the five equations that contain them. It is not
hard to see that if we have set $x^u_i\neq x^n_j$ all equations are satisfied.
If we have set both to $1$, then the forced edge must be used twice, making the
local edge cost at least $6$, giving a credit of $3$, which is more than
sufficient.

\textbf{Dishonest tours:} If both $x^u_i$ and $x^n_j$ are dishonest the tour
must be using the forced edge in both directions. Thus, the local cost is $5$
or more, giving a credit of $2$. There is always an assignment that satisfies
three out of the five equations, so this case is done. If one of them is
dishonest, the other must be set to $1$ to ensure strong connectivity. Thus,
there are two simple edges used leaving the gadget, making the local cost $4$
(perhaps the same edge is used twice). We can set the honest variable to $1$
(satisfying its two cycle equations), and the other to $0$, leaving at most one
equation unsatisfied.

\subsubsection{Assignment for Contact Variables} 

First, we note that for any valid tour, we have $c_T(V^{3A}_j)\geq 10+\la$.
This is because the two forced edges of weight $\lambda$ must be used, and
there exist $10$ vertices in the gadget for which all outgoing edges have
weight $1$. Let us define the credit $cr_T(V^{3A}_j)= c_T(V^{3A}_j) -(10+\la)$.

\textbf{Honest Traversals:}  We assume that the underlying tour is honest, that
is, forced edges are traversed  only in one direction. We set $x$ to $1$ if the
forced edge is used in the direction $x^r\to_F x^l$ and $0$ otherwise. In the
first case we know that the simple edges going into $x^r$ and out of $x^l$ are
used. In the second, the edges $e^1_j\to x^l$ and $x^r\to e^2_j$ are used. We
do similarly for $y,z$.

We are interested in the equation $x\oplus y\oplus z=1$ and the six cycle
equations involving $x,y,z$. The assignment we pick for honest variables
satisfies the cycle equations, so if it also satisfies the size-three equation
we are done. If not, we have to prove that the tour pays at least $11+\lambda$.

\textbf{Case $(x=y=z=0)$:} Due to our assumption, we know that $e^2_j
\rightarrow y^l\rightarrow_F y^r\rightarrow e^3_j \rightarrow z^l \rightarrow_F
z^r \rightarrow e^1_j \rightarrow x^l\rightarrow_F x^r\rightarrow e^2_j$ is a
part of the tour.  Since $E_T$ is a tour, there exists a vertex in
$V^{3A}_j\backslash \{s_j,t_j\}$  that is visited twice and we get
$c_T(V^{3A}_j)\geq 11+\lambda $.  Thus, we can spend the credit
$cr_T(V^{3A}_j)\geq 1$ on the unsatisfied equation $x\oplus y\oplus z=1$.

\textbf{ Case $(x+y+z=2)$:} Without loss of generality, let us assume that
$x=y=1\neq z$ holds.  Then, we know that $e^3_	j\rightarrow z^l \rightarrow_F
z^r \rightarrow e^1_j$ is a part of the tour. But, this implies that there is a
vertex in $V(G^{3A}_j)$ that is visited twice. Hence, we have that
$cr_T(V^{3A}_j) \geq 1$. 


\textbf{Dishonest Traversals:} Consider the situation, in which some forced
edges $\{\gamma^r, \gamma^l\}_F$ are traversed in both directions for some
variables $\gamma \in \{x,y,z\}$. For the honest variables, we set them to the
appropriate value as before, and this satisfies their cycle equations. Observe
now that if a forced edge $\gamma^l \to_F \gamma^r$ is also used in the opposite
direction, then there must be another edge used to leave the set
$\{\gamma^l,\gamma^r\}$. Thus the local edge cost of this set is at least $3$.
It follows that the credit we have for the gadget is at least as large as the
number of dishonest variables. We can give appropriate values to them so each
satisfies one cycle equation and the size-three equation is satisfied. Thus,  the
number of unsatisfied equations is not larger than our credit.

In summary, for every tour $E_T$ in $G_A$, we can find an assignment to the
variables of $I_2$ such that all unsatisfied  equations  are paid by the credit
induced by $E_T$.
 
\end{proof}

We are ready to give the proof of  Theorem~\ref{thm:atspmain}.

\begin{proof}[Proof of Theorem~\ref{thm:atspmain}]

We are again given an instance $I_1$ of the \maxlin~with $\nu$ variables and
$m$ equations. For all $\delta>0$, there exists a $k$ such that if we repeat
each equation $k$ time we get an instance $I^{(k)}_1$ with $m'=km$ equations
and $\nu$ variables such that $\nu/m'\le \delta$.

Then, from $I^{(k)}_1$, we generate an instance $I_2$ of the Hybrid problem and
the corresponding directed graph $G_A$. Due to
Lemmata~\ref{lem:atsp1},~\ref{lem:atsp2} and Theorem~\ref{thm:hybrid}, we know
that for all $\eps>0$, it is \np-hard to tell whether there is a tour with cost
at most $37m'+ 5\nu +2m(\nu+\lambda) +\eps \cdot m' \leq 37\cdot m' +\eps'm'$
or all tours have cost at least $37m' + (0.5 -\eps) m'  \geq 37.5 \cdot m'
-\eps'\cdot m'  $, for some $\eps'$ depending only on $\eps,\delta,\lambda$.
The ratio between these two cases can get arbitrarily close to $75/74$ by
appropriate choices for $\epsilon, \delta, \lambda$.

\end{proof}

\section{ Concluding Remarks } 

In this paper, we proved that it is hard to approximate the  ATSP and the TSP
within any constant factor less than $75/74$  and $123/122$, respectively.
The proof method required essentially new ideas and constructions from the 
ones used before in that context.
Since the best known upper
bound on the approximability is $O(\log n/ \log \log n)$ for ATSP and $3/2$ for TSP, 
there is certainly
room for improvements. Especially, in the asymmetric version of the TSP, there is
a large gap between the approximation lower and upper bound, and it remains a major
open problem on the existence of an efficient constant factor approximation algorithm for
that problem.
Furthermore, it would be nice to investigate if some of the ideas of this
paper, and in particular  the bi-wheel amplifiers, can be used to offer improved
hardness results for other optimization problems, such as the Steiner Tree problem.

\end{document}